\declaretheoremstyle[notefont=\bfseries,notebraces={}{},%
    headpunct={},postheadspace=1em]{mystyle}
\declaretheorem[style=mystyle,numbered=yes,name=Lemma]{LemmaSecondLawNumberCurrent}
\declaretheorem[style=mystyle,numbered=yes,name=Postulate]{PostulatePhysicalProbes}
\declaretheorem[style=mystyle,numbered=yes,name=Lemma,sibling=LemmaSecondLawNumberCurrent]{LemmaSecondLawHeatCurrent}
\declaretheorem[style=mystyle,numbered=yes,name=Theorem]{TheoremOnsager}
\declaretheorem[style=mystyle,numbered=yes,name=Theorem,sibling=TheoremOnsager]{TheoremUniqueness}
\declaretheorem[style=mystyle,numbered=yes,name=Theorem,sibling=TheoremUniqueness]{TheoremExistence}
\declaretheorem[style=mystyle,numbered=yes,name=Corollary,parent=TheoremUniqueness]{CorollaryExistence}
\DeclarePairedDelimiter\abs{\lvert}{\rvert}%
\DeclarePairedDelimiter\norm{\lVert}{\rVert}%
\let\oldabs\abs
\def\abs{\@ifstar{\oldabs}{\oldabs*}}
\let\oldnorm\norm
\def\norm{\@ifstar{\oldnorm}{\oldnorm*}}
\newcommand{\Lfun}[2]{{\cal L}^{(#1)}_{#2}}
\newcommand{\Tr}[1]{\operatorname{Tr} \left\{ #1 \right\}}
\DeclareMathAlphabet{\gcal}{OMS}{cmsy}{m}{n}
\newcommand{\myT}{{\gcal T}}
\begin{document}

\title{Temperature and voltage measurement in quantum systems far from equilibrium}
\author{Abhay Shastry}
\affiliation{Department of Physics, University of Arizona, 1118 East Fourth Street, Tucson, AZ 85721}
\author{Charles\ A.\ Stafford}
\affiliation{Department of Physics, University of Arizona, 1118 East Fourth Street, Tucson, AZ 85721}
\date{\today}
\begin{abstract}
We show that a local measurement of temperature and voltage for a quantum system
in steady state, arbitrarily far from equilibrium, with arbitrary interactions within the system, is unique when it exists.
This is interpreted as a consequence of the second law of thermodynamics. We further derive a necessary and sufficient condition
for the existence of a solution. In this regard, we find that a positive temperature solution exists whenever there is no net population inversion.
However, when there is a net population inversion, we may characterize the system with a unique negative temperature. Voltage and temperature
measurements are treated on an equal footing:
They are simultaneously measured in a noninvasive manner, via a weakly-coupled {\em thermoelectric probe}, 
defined by requiring vanishing charge {\em and} heat dissipation into the probe.
Our results strongly suggest that a local temperature measurement without a simultaneous local voltage measurement, or vice-versa, is a misleading characterization of
the state of a nonequilibrium quantum electron system. These results provide a firm mathematical foundation for voltage and temperature
measurements far from equilibrium.
\end{abstract}
\maketitle

\section{Introduction}
\label{sec:Introduction}

Scanning probe microsopy \cite{Binnig1982a,Binnig1983b,Binnig1986,Chen93,Kalinin2007} revolutionized the field of nanoscience %in the 1980s, 
and enabled the measurement of local thermodynamic observables such as voltage \cite{Muralt1986} and temperature \cite{Williams1986}
in nonequilibrium quantum systems.
The ability to define local thermodynamic variables in a system far from equilibrium 
is of fundamental interest because it is a necessary step toward the construction of
nonequilibrium thermodynamics \cite{Ruelle2000,Casas-Vazques2003,Lebon2008,Cugliandolo2011,Jacquet2012,Stafford2014,Esposito2015,Shastry2015}.
Many experiments in mesoscopic electrical transport utilize voltage probes as circuit elements \cite{Picciotto2001,Gao2005,Benoit1986,Shepard1992},
and scanning potentiometers are now a mature technology \cite{Kanagawa2003,Bannani2008,Luepke2015}, routinely achieving sub-angstrom spatial resolution
to study a host of novel physical phenomena \cite{Wang2013,Clark2014,Willke2015,Yamasue2015}.
In contrast, scanning thermometry \cite{Williams1986} has proven significantly more challenging \cite{Majumdar1999}, but is
currently undergoing a rapid evolution toward nanometer resolution \cite{Kim2011,Yu2011,Kim2012,Menges2012},
leading to important insights into transport and dissipation at the nanoscale \cite{Agrait2002,Ward2011,Lee2013,Kim2015,Hu2015}.
A fundamental challenge for theory is to develop a rigorous mathematical description of such local thermodynamic measurements.
Until now, mainly operational definitions 
\cite{Engquist81,Dubi2009b,Jacquet2009,Dubi2009,Caso2011,Jacquet2012,Sanchez2011,Caso10,Bergfield2013demon,Bergfield2015,Ye2015}
have been advanced, leading to a competing panoply of often contradictory
definitions of such basic observables as temperature and voltage.

%Recent advances in scanning probe microscopy \cite{Chen93,Kalinin2007} have enabled the measurement of
%local thermodynamic observables, such as voltage \cite{Muralt1986} and temperature \cite{Williams1986} in nonequilibrium quantum systems.
%Scanning potentiometers are now a mature technology, routinely achieving sub-angstrom spatial resolution, and many experiments
%in mesoscopic electrical transport utilize voltage probes as circuit elements \cite{Benoit1986,Shepard1992,Picciotto2001,Gao2005}.
%In contrast, scanning thermometry \cite{Williams1986} has proven significantly more challenging \cite{Majumdar1999}, but is
%currently undergoing a rapid evolution %progressing 
%toward nanometer resolution \cite{Kim2011,Yu2011,Kim2012}.
%A fundamental challenge for theory is to develop a rigorous mathematical description of such local thermodynamic measurements.
%Until now, mainly operational definitions 
%\cite{Engquist81,Dubi2009b,Jacquet2009,Dubi2009,Caso2011,Jacquet2012,Sanchez2011,Caso10,Bergfield2013demon,Bergfield2015} 
%of such local measurements have been advanced, leading to a competing panoply of often contractory
%definitions of such basic observables as temperature and voltage.
The second law of thermodynamics is one of the cornerstones of physics.
The origin of the second law %of thermodynamics 
was rooted in empirical observations in the early nineteeth century,
and its theoretical explanation emerged with the gradual development of the statistical foundation of thermodynamics.
The statistical basis of the second law %of thermodynamics 
places it in a league of its own, among the laws of physics.
A quote on the subject, at once exalting and to the point, by the famous astrophysicist Sir Arthur Eddington reads as follows \cite{Eddington1929}:
``The law that entropy always increases holds, I think, the supreme position among the laws of Nature. 
If someone points out to you that your pet theory of the universe is in disagreement with Maxwell's equations --- 
then so much the worse for Maxwell's equations. 
If it is found to be contradicted by observation --- well, these experimentalists do bungle things sometimes. 
But if your theory is found to be against the second law of thermodynamics I can give you no hope; 
there is nothing for it but to collapse in deepest humiliation." Any theory which purports to describe the measurement of temperature, voltage or other
thermodynamic parameters, must therefore satisfy this fundamental requirement, and as Eddington notes, regardless of the nature of microscopic interactions.

We examine statements of the second law of thermodynamics, accompanied with mathematical proofs,
and their consequences, in the context of local noninvasive measurements of temperature and voltage in nonequilibrium quantum electron systems.
We derive our results from very general considerations, i.e.,
for electron transport in steady state, arbitrarily far from equilibrium, and for arbitrary interactions within 
the quantum electron system. 
Our considerations apply to any system of fermions, charged or neutral.
While our analyses in this article are presented in a
theorem-proof format, their motivation draws from physical
principles. We show that the uniqueness of the temperature and voltage measurement
is a consequence of the second law of thermodynamics and that, in order to obtain a unique measurement, it is necessary to measure both
temperature and voltage simultaneously. Simply put, this is because electrons carry both energy and charge.

In order to have a meaningful definition of temperature, the Hamiltonian must be bounded below ($\langle H\rangle\geq-c$ for some finite $c\in\mathbb{R}$).
By the same token,
a system can, in principle, exhibit negative temperatures if the energy averaged over the spectrum is bounded above ($\langle H\rangle\leq c$ for some finite $c\in\mathbb{R}$). These are well-known results in statistical physics, and we highlight their
role in the context of local noninvasive measurements of temperature and voltage.
We derive a condition, that is both necessary and sufficient, for the existence of a %solution to such measurements. 
joint temperature and voltage measurement.
This condition corresponds physically to a nonequilibrium system that does not exhibit local population inversion. We obtain also, as
a corollary of the former condition, the result that nonequilibrium systems exhibiting local population inversion can be characterized with a 
negative temperature which is also unique. Population inversion is the working principle behind important Fermionic devices such as the maser and laser
\cite{Einstein1917,Kastler1957,Schawlow1958,Scully1967}.

In this article, we consider a probe that couples exclusively to the electronic degrees of freedom.  
%It should be emphasized that, 
Out of equilibrium, the temperature distributions of different microscopic
degrees of freedom (e.g., electrons, phonons, nuclear spins) do not, in general, coincide, so that one has to distinguish between measurements of the electron
temperature \cite{Engquist81,Dubi2009b,Bergfield2013demon,Meair14} and the lattice temperature \cite{Ming10,Galperin2007}.  
This distinction is particularly acute in the extreme limit of
elastic quantum transport \cite{Bergfield09}, where electron and phonon temperatures are completely decoupled.
It should be emphasized that the electrons within the system are free to undergo arbitrary interactions, e.g., with photons, phonons, other electrons, etc.  
However, {\em direct} heat transport into the probe via black-body
radiation, phonons, etc.\ is excluded.  Inclusion of these additional heat transfer processes into the probe leads to a temperature
measurement that is simply a combination of the temperatures of the various microscopic degrees of freedom \cite{Bergfield2015}.

The article is organised as follows. We outline the formalism in Sec.\ \ref{sec:Formalism}, and
introduce a postulate that helps put our results on sound mathematical footing. In Sec.\ \ref{sec:LocalMeasurements}, we discuss
our theory of local thermodynamic measurements, explain the idea behind noninvasive measurements, and also derive some useful expressions for further
analysis. In Sec.\ \ref{sec:Uniqueness}, we provide several statements of the second law of thermodynamics and show their relation to the uniqueness
of temperature and voltage measurements.
In Sec.\ \ref{sec:existence}, we start by defining certain useful quantities and proceed to derive the condition for the existence of a solution.
We also discuss here the case of broadband probes in order to further illustrate the physical meaning behind our results, and conclude that
probes operating in the broadband limit can be considered to be {\em ideal}.
We consider the other extreme as well, i.e., narrowband probes and conclude that they are unsuitable for measurements. Our results
are illustrated for a two-level system which is detailed in Sec.\ \ref{sec:TwoLevelSystem}.
We conclude with a summary of our central findings in Sec.\ \ref{sec:conclusions}, contrasting our approach to prior theoretical work, and discuss possible
future directions.
Some key results on the local properties of fermions in a nonequilibrium steady state are presented in Appendix \ref{sec:AppA}, which are needed in
our analysis of the measurement problem.

\section{Formalism}
\label{sec:Formalism}

We use the nonequilibrium Green's function formalism (NEGF) for describing the motion of electrons within a quantum conductor.
A general expression for the nonequilibrium steady-state electrical current ($\nu=0$) \cite{Meir92} and the electronic contribution to the heat current ($\nu=1$) \cite{Bergfield09b}
flowing into a macroscopic electron reservoir $P$,
can be written in a form analogous to the two-terminal
Landauer-B{\"u}ttiker formula \cite{Stafford2014}:
\begin{equation}
\begin{aligned}
I_{p}^{(\nu)}=\frac{1}{h}\int_{-\infty}^{\infty}d\omega(\omega-\mu_{p})^{\nu}&\myT_{ps}(\omega)[f_{s}(\omega)-f_{p}(\omega)],\\
\text{with}\ \nu &=\{0,1\},
\label{Rearraged}
\end{aligned}
\end{equation}
where one may think of
\begin{equation}
\myT_{ps}(\omega)=2\pi \Tr{\Gamma^{p}(\omega)A(\omega)}
\label{tps}  
\end{equation}
as a local transmission function between the
macroscopic probe terminal and the nonequilibrium quantum system. $f_{s}(\omega)$ is the nonequilibrium distribution function
of the system, as sampled by the probe, and is defined by \cite{Stafford2014}
\begin{equation}
f_{s}(\omega)\equiv\frac{\Tr{\Gamma^{p}(\omega)G^{<}(\omega)}}{2{\pi}i\Tr{\Gamma^{p}(\omega)A(\omega)}}.
\label{nonequilibriumdistribution}
\end{equation}
In Eqs.\ (\ref{tps}--\ref{nonequilibriumdistribution}), $A(\omega)=\big(G^{<}(\omega)-G^{>}(\omega)\big)/2\pi i$
is the spectral function. $G^{<}(\omega)$ and $G^{>}(\omega)$ are the Fourier transforms of the 
Keldysh ``lesser'' and ``greater'' %retarded and advanced 
Green's functions \cite{Stefanucci2013}, 
describing the nonequilibrium electron and hole distributions within the system, respectively (see Appendix \ref{sec:AppA} for details). 
$\Gamma^{p}(\omega)$
is the tunneling width matrix describing the coupling of the probe to the system, and $f_{p}(\omega;\mu_{p},T_{p})=1/\big(1+\exp(\frac{\omega-\mu_{p}}{T_{p}})\big)$
is the Fermi-Dirac distribution of the probe. We note that the expression in
Eq.\ (\ref{Rearraged}) is completely general and allows for arbitrary interactions within the quantum system, and arbitrary bias conditions
of the reservoirs.

Since the spectral function $A(\omega)$ is positive-semidefinite and the probe-system coupling $\Gamma^{p}(\omega)$  
is positive-definite (see Appendix \ref{sec:AppA}), we note
that 
\begin{equation}
\begin{aligned}
\Tr{A(\omega)\Gamma(\omega)}&=\Tr{A(\omega)^{1/2}A^{1/2}(\omega)\Gamma(\omega)}\\
&=\Tr{A^{1/2}(\omega)\Gamma(\omega) A^{1/2}(\omega)}\\ &\geq0,
\end{aligned}
\end{equation}
where $A^{1/2}(\omega)$ is the positive-semidefinite square root of $A(\omega)$. $A^{1/2}(\omega)\Gamma(\omega) A^{1/2}(\omega)$ becomes
positive-semidefinite when $A^{1/2}(\omega)$ and $\Gamma(\omega)$ are positive-semidefinite \cite{Horn1986} and therefore we have
\begin{equation}
\myT_{ps}(\omega)\geq0,\ \forall \omega\in \mathbb{R}.
\label{tps_positive}
\end{equation}

We note that $f_{s}(\omega)$ satisfies the property of a distribution function, namely,
\begin{equation}
0\leq f_{s}(\omega)\leq1\ \ \forall \omega\in \mathbb{R},
\label{fsBounded}
\end{equation}
as shown in appendix \ref{sec:AppA}. 
We start our analysis with the following postulate, and explain its physical significance.
\begin{PostulatePhysicalProbes}
\label{postulate}
The local probe-system transmission function $\myT_{ps}: \mathbb{R}\to[0,\infty)$ and the nonequilibrium distribution function $f_{s}:\mathbb{R}\to[0,1]$
are measurable over any interval $[a,b]\in\mathbb{R}$, and
$\myT_{ps}(\omega)$ satisfies
\begin{equation}
\label{finiteParticleRate}
0<\int_{-\infty}^{\infty}d\omega\myT_{ps}(\omega) <\infty,
\end{equation}
and
\begin{equation}
\label{finiteEnergyRate}
\bigg|\int_{-\infty}^{\infty}d\omega\ \omega\myT_{ps}(\omega)\bigg|<\infty.
\end{equation}
\end{PostulatePhysicalProbes}

The measurability of $\myT_{ps}(\omega)$ and $f_{s}(\omega)$ is taken to lend meaning to the currents  in Eq.\ (\ref{Rearraged}). 
We point out that the finiteness of the two integrals given in
Eqs.\ (\ref{finiteParticleRate}) and (\ref{finiteEnergyRate}) is more relevant to our discussion of 
existence in Sec. \ref{sec:existence}.
Our result on uniqueness, as stated in Theorem \ref{ThmUniqueness}, 
is somewhat stronger and requires only that the function $\myT_{ps}(\omega)$
grow slower than exponentially for large values of energy (for $\omega\rightarrow\pm\infty$). 

On physical grounds, the probe-sample transmission function $\myT_{ps}(\omega)$ can be argued to have a compact support (non-zero only
for some finite interval $[\omega_{-},\omega_{+}]\subset\mathbb{R}$). It is easy to see that $\myT_{ps}$ must have a lower bound $\omega_{-}$  such
that $\myT_{ps}(\omega)=0\ \forall\ \omega<\omega_{-}$, since physical Hamiltonians must have a finite ground-state energy.
However, for energies larger than the probe work function ($\omega_{+}$), it can be argued that the particle will merely pass through 
the probe and
not contribute to the steady state currents into the probe. $\myT_{ps}(\omega)$ then has a compact support and satisfies Eqs.\ 
(\ref{finiteParticleRate}) and (\ref{finiteEnergyRate}).
In section \ref{sec:existence}, we comment upon the limiting case where the measure of $\omega\myT_{ps}(\omega)$ in 
Eq.\ (\ref{finiteEnergyRate}) tends to infinity. The absolute value on the $lhs$ in Eq.\ (\ref{finiteEnergyRate})
is somewhat redundant since the
limiting case must have $\omega_{+}\rightarrow\infty$
while $\omega_{-}\rightarrow-\infty$ is ruled out based on the principle that any physical spectrum has a finite ground-state energy.
% Conditions of Eqs.\ (\ref{finiteParticleRate}) and (\ref{finiteEnergyRate})
% are stated in a more general manner 
We note that Eqs.\ (\ref{finiteParticleRate}), (\ref{finiteEnergyRate}) also imply
\begin{equation}
0<\int_{-\infty}^{\infty}d\omega\myT_{ps}(\omega)f_{s}(\omega),\int_{-\infty}^{\infty}d\omega\myT_{ps}(\omega)f_{p}(\omega) <\infty
\label{BoundedParticleRates}
\end{equation}
and
\begin{equation}
\int_{-\infty}^{\infty}d\omega\ \omega\myT_{ps}(\omega)f_{s}(\omega),\int_{-\infty}^{\infty}d\omega\ \omega\myT_{ps}(\omega)f_{p}(\omega) <\infty.
\label{BoundedEnergyRates}
\end{equation}

\section{Local Measurements}
\label{sec:LocalMeasurements}

The local voltage and temperature of a nonequilibrium quantum system, as measured by a scanning
thermoelectric probe, is defined by
the simultaneous conditions of vanishing net charge dissipation {\em{and}} vanishing net heat dissipation into the
probe \cite{Bergfield2013demon,Meair14,Bergfield2014,Bergfield2015,Stafford2014,Shastry2015}:
\begin{equation}
I_{p}^{(\nu)}=0,\ \ \ \ \nu \in \{0,1\},
\label{equilibrium}
\end{equation}
where $\nu=0,1$ correspond to the electron number current and the electronic contribution to the heat current, 
respectively. Eq.\ (\ref{equilibrium}) gives the conditions under
which the probe is in local equilibrium with the sample, which is itself arbitrarily far from equilibrium. 

%Experimentally, our approach is motivated by the technique of scanning thermal microscopy \cite{Williams1986,Majumdar1999} whose 
%resolution has been recently brought down to the nanometer range \cite{Kim2011,Kim2012,Yu2011}. 
We define the system's local temperature
and voltage using a probe that is weakly coupled via a tunnel barrier. The other end of this scanning probe \cite{Chen93,Kalinin2007} is the macroscopic
electron reservoir whose temperature and voltage are both adjusted until Eq.\ (\ref{equilibrium}) is satisfied.
A weakly coupled probe is a useful theoretical construction for our analysis, and the extension of our
results beyond the weak-coupling limit is an open question. We explain the physical basis of weak coupling below, 
and derive some useful formulae.

\begin{figure}
\centering
\captionsetup{justification=raggedright,
singlelinecheck=false
}
\includegraphics[width=3.4in]{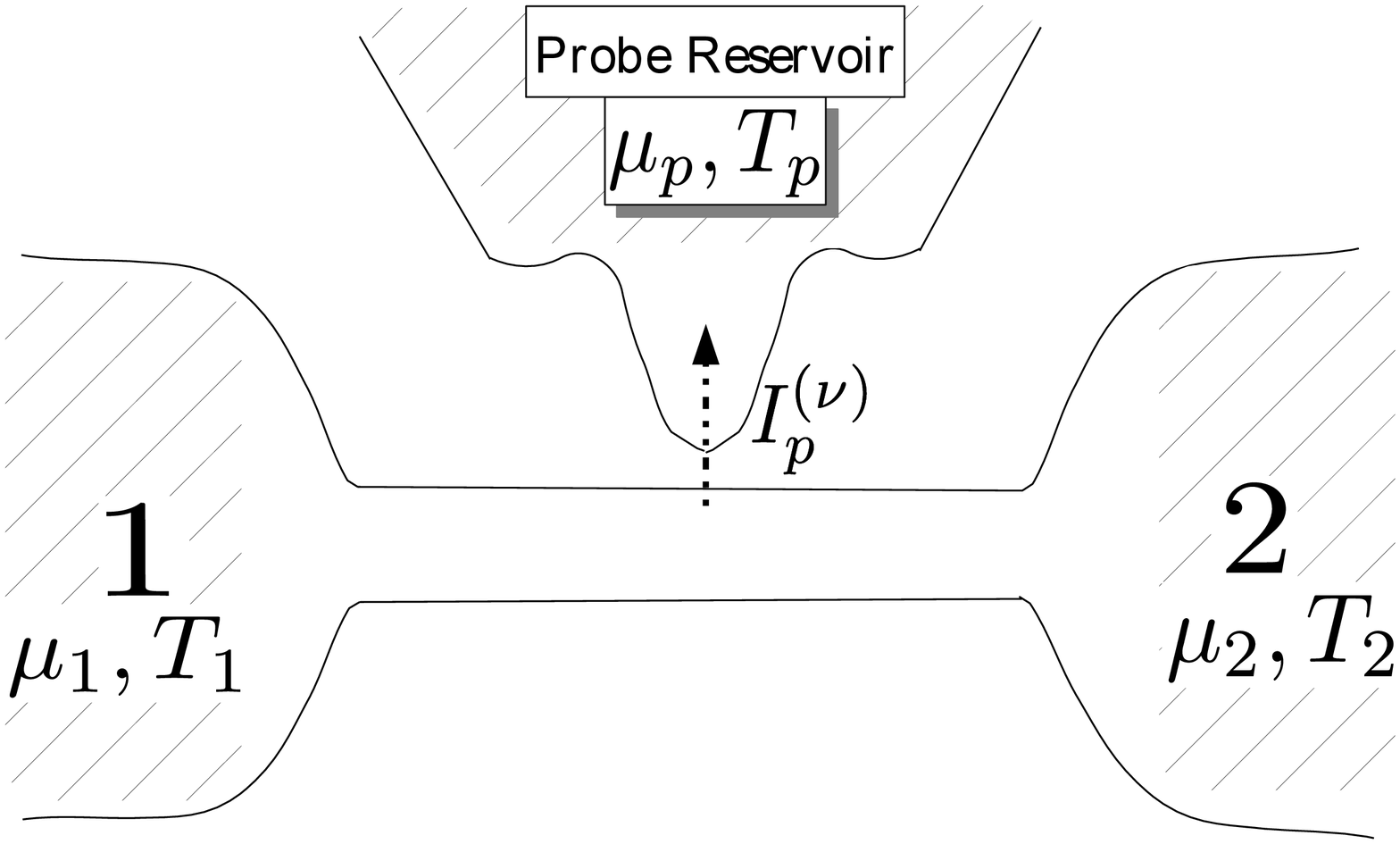}
\caption{\label{fig:Setup}Illustration of the measurement setup: The quantum conductor represented below is
in a nonequilibrium steady state. A weakly-coupled scanning tunneling probe
noninvasively measures the local voltage ($\mu_{p}$) and local temperature ($T_{p}$) {\em simultaneously}: By requiring
both a vanishing net charge exchange ($I^{(0)}_{p}=0$){\em and} a vanishing net heat exchange ($I^{(1)}_{p}=0$) with the system.
The nonequilibrium steady state has been prepared, in this particular illustration, via the electrical and thermal 
bias of the strongly-coupled reservoirs ($1$ and $2$). The measurement method itself is completely general
and does not depend upon (a) how such a nonequilibrium steady-state is prepared, (b) how far from equilibrium the quantum electron system is
driven, and (c) the nature of interactions within that system.
% When the probe bias corresponds to point $P$,
% particles flow into the probe ($I^{(0)}_{p}>0$). Likewise when the probe bias corresponds to point $Q$, particles flow out of the probe ($I^{(0)}_{p}<0$).
}
\end{figure}

\subsection{Noninvasive measurements}

When the coupling of the probe to the system is weak, we may take $\myT_{ps}(\omega)$ in Eq.\ (\ref{tps}) and
the local nonequilibrium distribution function $f_{s}(\omega)$ to
be independent of the probe Fermi-Dirac distribution $f_{p}(\omega)$. 
While both these quantities depend upon the local probe-system coupling in an obvious manner,
the weak-coupling condition essentially implies that the nonequilibrium steady state of the system is unperturbed by the introduction of the probe
terminal. The voltage and temperature of the probe itself play no role in preparing the nonequilibrium steady state. In other words,
the probe does {\em not} drive the system but merely exchanges energy and particles across a weakly-coupled tunnel barrier
and constitutes a {\em noninvasive} measurement.

Given a system prepared in a certain nonequilibrium steady state (e.g., by a particular bias of the strongly coupled reservoirs), the currents given
by Eq.\ (\ref{Rearraged}) are functions of the probe Fermi-Dirac distribution specified by its temperature and
chemical potential
\begin{equation}
I^{(\nu)}_{p}\equiv I^{(\nu)}_{p}(\mu_{p},T_{p}).
\end{equation}
% The probe measurement defined by Eq.\ (\ref{equilibrium}) simultaneously measures both the local temperature and voltage of the system.
% We refer to such probes as {\em{thermoelectric probes}}, as opposed to a temperature probe or a voltage probe.

It can be seen that the currents are continuous functions of $\mu_{p} \in (-\infty,\infty)$
% \footnote{In practice,
% one places cut-offs over $\mu_{p}$ since the of spectrum any quantum
% system has minimum and maximum energies (even with level broadening 
% for open quantum systems).} 
and $T_{p} \in (0,\infty)$ with continuous gradient
vector fields defined by
\begin{equation}
\nabla{I^{(\nu)}_{p}}\equiv\bigg(\frac{\partial{I^{(\nu)}_{p}}}{\partial{\mu_{p}}},\frac{\partial{I^{(\nu)}_{p}}}{\partial{T_{p}}}\bigg).
\label{CurrentGradient}
\end{equation}
With $k_{B}$ set to unity, we compute the gradients of the currents using Eq.\ (\ref{Rearraged}).
We find the gradient of the number current to be
\begin{equation}
\label{NumberCurrentGradient}
\nabla{I^{(0)}_{p}}=\bigg(-\Lfun{0}{ps},-\frac{\Lfun{1}{ps}}{T_{p}}\bigg).
\end{equation}
The gradient of the heat current reduces to
\begin{equation}
\label{HeatCurrentGradient}
\nabla{I^{(1)}_{p}}=\bigg(-\Lfun{1}{ps}-I^{(0)}_{p},-\frac{\Lfun{2}{ps}}{T_{p}}\bigg),
\end{equation}
where we define the response coefficients $\Lfun{\nu}{ps}$ %\cite{Onsager1931} 
as
\begin{equation}
\begin{aligned}
\Lfun{\nu}{ps}&\equiv\Lfun{\nu}{ps}(\mu_{p},T_{p})\\&=\frac{1}{h}\int_{-\infty}^{\infty}d\omega(\omega-\mu_{p})^{\nu}\myT_{ps}(\omega)\bigg(-\frac{\partial{f_{p}}}{\partial{\omega}}\bigg),
\end{aligned}
\label{eq:Onsager}
\end{equation}
which are easily seen to be finite
\footnote{$\Lfun{\nu}{ps}(\mu_{p},T_{p})$ are finite even if $\myT_{ps}(\omega)$ and $\omega\myT_{ps}(\omega)$ do not
obey the finite measure conditions of postulate \ref{postulate} due to the exponentially decaying tails of the Fermi-derivative.
We merely need $\myT_{ps}(\omega)$ to grow slower than exponentially for $\omega\to\pm\infty$.}.

Although the coefficients $\Lfun{\nu}{ps}$ formally resemble the Onsager linear-response coefficients \cite{Onsager1931}
of an elastic quantum conductor \cite{Sivan1986},
it is very important to note that we do {\em not} make the assumptions of linear response, local equilibrium, or elastic transport
in the above definition of $\Lfun{\nu}{ps}$: The system itself may be arbitrarily far from equilibrium with arbitrary inelastic scattering processes.
The coefficients above appear
naturally when we calculate the gradient fields defined by Eq.\ (\ref{CurrentGradient}), and the gradient operator is of course
given by the first derivatives. Our main results follow from an analysis of the properties of these gradient fields.

\section{Uniqueness and the second law}
\label{sec:Uniqueness}
We now turn to one of the central problems which we set out to address:
$I_{p}^{(\nu)}(\mu_{p},T_{p})=0$, with $\nu=\{0,1\}$,
is a system of coupled nonlinear equations in two variables that defines our local voltage and temperature measurement.
There is no {\it a priori} reason to expect a unique solution, if a solution exists at all.
We begin the section with statements of the second law of thermodynamics,
and conclude by showing that the uniqueness of the measurement emerges as a consequence.

\subsection{Statements of the second law}
We note that $\forall\ \mu_{p}\in(-\infty,\infty)$ and $T_{p}\in(0,\infty)$,
\begin{equation}
\begin{aligned}
\Lfun{0}{ps}(\mu_{p},T_{p})&>0\\
\Lfun{2}{ps}(\mu_{p},T_{p})&>0,
\end{aligned}
\end{equation}
since $\myT_{ps}(\omega)\geq0$, and the measure of $\myT_{ps}(\omega)$ and the Fermi-function derivative are both nonzero and strictly positive. 
This leads to two statements of the second law of thermodynamics, related to the Clausius statement, which are
presented in the following two lemmas. The idea is to choose the correct contour for each case, and to evaluate the line integral over
the current gradients in Eqs.\ (\ref{NumberCurrentGradient}) and (\ref{HeatCurrentGradient}). A cursory glance at the
number current gradient in Eq.\ (\ref{NumberCurrentGradient}) suggests that the contour should be defined over a constant temperature, 
while the heat current gradient in Eq.\ (\ref{HeatCurrentGradient}) suggests a line integral over a constant voltage contour.

\begin{figure}
\captionsetup{justification=raggedright,
singlelinecheck=false
}\includegraphics[width=3.4in]{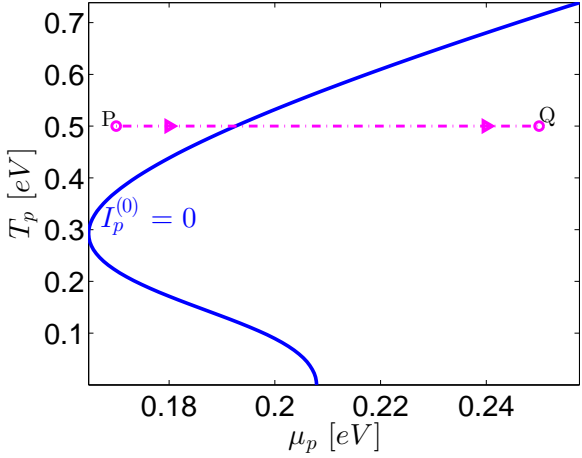}
\caption{\label{fig:Lemma1}Illustration of Lemma \ref{LemmaSecondLawNumberCurrent1}: The contour $PQ$ shown in magenta cuts the number
current contour $I^{(0)}_{p}=0$ (or any $I^{(0)}_{p}=\text{constant}$) exactly once.
The contour line from $P$ to $Q$ is at a constant temperature ($T_{p}=\text{constant}$), and illustrates the Clausius statement: 
The number current is monotonically decreasing along $PQ$.
The system and bias conditions are detailed in Sec.\ \ref{sec:TwoLevelSystem}.
}
\end{figure}

\begin{LemmaSecondLawNumberCurrent}
\label{LemmaSecondLawNumberCurrent1}
The number current contour defined by $I^{(0)}_{p}(\mu_{p},T_{p})=0$ exists for all $T_{p}\in (0,\infty)$ and defines a function $M:(0,\infty)\rightarrow\mathbb{R}$
where $\mu_{p}=M(T_{p})$, such that the second law of thermodynamics is obeyed:
\begin{equation}
\begin{aligned}
\label{lemma1}
I^{(0)}_{p}(\mu'_{p},T_{p})&>0,\text{if}\ \mu'_{p}<M(T_{p})\ \text{and}\\
I^{(0)}_{p}(\mu'_{p},T_{p})&<0,\text{if}\ \mu'_{p}>M(T_{p}).
\end{aligned}
\end{equation}
\begin{proof}
We first show that $I^{(0)}(\mu_{p},T_{p})=0$ is satisfied for all $T_{p}\in (0,\infty)$. For any $T_{p}\in (0,\infty)$, we have
% \begin{equation}
% \label{lemma1}
% \begin{aligned}
% &=
% \frac{1}{h}&&\int_{-\infty}^{\infty}d\omega\myT_{ps}(\omega)[f_{s}(\omega)\\ -\lim_{\mu_{p}\to-\infty}f_{p}(\omega)]&&\\
% &=\frac{1}{h}\int_{-\infty}^{\infty}d\omega\myT_{ps}(\omega)f_{s}(\omega)\\
% &>0,
% \end{aligned}
% \end{equation}
\begin{equation}
\begin{aligned}
\lim_{\mu_{p}\to-\infty}I^{(0)}(\mu_{p},T_{p}) =& \frac{1}{h}\int_{-\infty}^{\infty}d\omega\myT_{ps}(\omega)[f_{s}(\omega) \\
&\ \ \ \ \ \ \ \ \ \ \ -\lim_{\mu_{p}\to-\infty}f_{p}(\omega)] \\
=&\frac{1}{h}\int_{-\infty}^{\infty}d\omega\myT_{ps}(\omega)f_{s}(\omega)\\
>& 0, 
\end{aligned}
\end{equation}
and
\begin{equation}
\begin{aligned}
\lim_{\mu_{p}\to\infty}I^{(0)}(\mu_{p},T_{p})=&\frac{1}{h}\int_{-\infty}^{\infty}d\omega\myT_{ps}(\omega)[f_{s}(\omega)\\
&\ \ \ \ \ \ \ \ \ \ \ -\lim_{\mu_{p}\to\infty}f_{p}(\omega)]\\
=&\frac{1}{h}\int_{-\infty}^{\infty}d\omega\myT_{ps}(\omega)(f_{s}(\omega)-1)\\
<&0.
\end{aligned}
\end{equation}
% since $0\leq f_{s}(\omega)\leq1$, and we ignore the case when $\myT_{ps}(\omega)f_{s}(\omega)=0\ \forall \omega\in(-\infty,\infty)$ on physical grounds.
This ensures at least one solution due to the continuity of the currents, but does not ensure uniqueness.

We note that $I_{p}^{(0)}$ is monotonically decreasing along $\mathbf{dl}=(d\mu_{p},0)$
\begin{equation}
\Delta{I^{(0)}_{p}}=\int_{{\mu}_{p}}^{\mu_{p}'}\nabla{I^{(0)}_{p}}.\mathbf{dl}=\int_{{\mu}_{p}}^{\mu_{p}'}-\Lfun{0}{ps}d\mu_{p}
\end{equation}
due to the fact that $\Lfun{0}{ps}$ is positive, and more explicity:
\begin{equation}
\begin{aligned}
\Delta{I^{(0)}_{p}}&=\frac{1}{h}\int_{-\infty}^{\infty}d\omega\myT_{ps}(\omega)[{f}_{p}({\mu}_{p},{T}_{p};\omega)-f_{p}(\mu_{p}',{T}_{p};\omega)]\\
&>0,\ \text{if}\ \mu_{p}'<{\mu}_{p}\\
&<0,\ \text{if}\ \mu_{p}'>{\mu}_{p}.
\label{SecondLawNumberCurrent}
\end{aligned}
\end{equation}
This implies the existence of a unique solution to $I_{p}^{(0)}(\mu_{p},T_{p})=0$ for every $T_{p}\in (0,\infty)$
which we denote by $\mu_{p}=M(T_{p})$, and Eq.\ (\ref{lemma1}) is implied by Eq.\ (\ref{SecondLawNumberCurrent}).
\end{proof}
\end{LemmaSecondLawNumberCurrent}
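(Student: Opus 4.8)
Fix $T_{p}\in(0,\infty)$ and regard $I^{(0)}_{p}(\,\cdot\,,T_{p})$ as a function of $\mu_{p}$ alone. The plan is to produce a zero by the intermediate value theorem and then promote it to a \emph{unique} zero, together with the claimed sign structure, by a strict monotonicity argument; since $T_{p}$ is arbitrary this simultaneously defines the function $M$.

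First I would pin down the boundary behaviour in $\mu_{p}$. As $\mu_{p}\to-\infty$ the probe distribution $f_{p}(\omega;\mu_{p},T_{p})\to0$ pointwise in $\omega$, and $0\le\myT_{ps}(\omega)f_{p}(\omega)\le\myT_{ps}(\omega)$ with $\myT_{ps}$ integrable by Eq.~(\ref{finiteParticleRate}); dominated convergence applied to Eq.~(\ref{Rearraged}) then gives $I^{(0)}_{p}\to\frac{1}{h}\int d\omega\,\myT_{ps}(\omega)f_{s}(\omega)$, which is strictly positive by the left inequality in Eq.~(\ref{BoundedParticleRates}). Symmetrically, as $\mu_{p}\to+\infty$ one has $f_{p}\to1$ pointwise and $I^{(0)}_{p}\to-\frac{1}{h}\int d\omega\,\myT_{ps}(\omega)\big(1-f_{s}(\omega)\big)$, which is strictly negative because $\myT_{ps}\ge0$ has nonzero measure (Postulate~\ref{postulate}) while $0\le f_{s}\le1$ is not identically equal to $1$ on $\mathrm{supp}\,\myT_{ps}$ — the mirror statement of Eq.~(\ref{BoundedParticleRates}), which I would quote from the Appendix. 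Since $I^{(0)}_{p}(\,\cdot\,,T_{p})$ is continuous in $\mu_{p}$, it is positive for all sufficiently negative $\mu_{p}$ and negative for all sufficiently large $\mu_{p}$, hence vanishes at some finite $\mu_{p}\in\mathbb{R}$.

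Next I would upgrade existence to uniqueness and extract the signs. From Eq.~(\ref{NumberCurrentGradient}), $\partial I^{(0)}_{p}/\partial\mu_{p}=-\Lfun{0}{ps}(\mu_{p},T_{p})$, and from Eq.~(\ref{eq:Onsager}), $\Lfun{0}{ps}=\frac{1}{h}\int d\omega\,\myT_{ps}(\omega)\big(-\partial_{\omega}f_{p}\big)>0$ for every $(\mu_{p},T_{p})$, since $-\partial_{\omega}f_{p}>0$ everywhere and $\myT_{ps}\ge0$ has nonzero measure. Thus $I^{(0)}_{p}(\,\cdot\,,T_{p})$ is strictly decreasing on $\mathbb{R}$, so the zero found above is the \emph{only} one; denote it $\mu_{p}=M(T_{p})$. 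As $T_{p}\in(0,\infty)$ was arbitrary, this defines $M:(0,\infty)\to\mathbb{R}$. Finally Eq.~(\ref{lemma1}) is immediate: for $\mu'_{p}<M(T_{p})$, $I^{(0)}_{p}(\mu'_{p},T_{p})>I^{(0)}_{p}(M(T_{p}),T_{p})=0$, and for $\mu'_{p}>M(T_{p})$ the inequality reverses; equivalently, integrating $\nabla I^{(0)}_{p}$ along $\mathbf{dl}=(d\mu_{p},0)$ gives $\Delta I^{(0)}_{p}=\frac{1}{h}\int d\omega\,\myT_{ps}(\omega)\big[f_{p}(\mu_{p},T_{p};\omega)-f_{p}(\mu'_{p},T_{p};\omega)\big]$, whose sign is fixed by the monotonicity of $f_{p}$ in $\mu_{p}$.

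The only genuinely delicate point is the treatment of the limits $\mu_{p}\to\pm\infty$: the interchange of limit and integral (which needs the finite, nonzero measure of $\myT_{ps}$ from Postulate~\ref{postulate} together with $0\le f_{s}\le1$), and securing the \emph{strict} endpoint inequalities — i.e.\ that the probe actually couples to partially filled states, so that neither $\int\myT_{ps}f_{s}$ nor $\int\myT_{ps}(1-f_{s})$ vanishes. Everything downstream — existence by the intermediate value theorem, uniqueness and Eq.~(\ref{lemma1}) from $\Lfun{0}{ps}>0$ — is then routine.
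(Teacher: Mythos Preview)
Your proposal is correct and follows essentially the same route as the paper: compute the limits $\mu_{p}\to\pm\infty$ to get opposite signs, invoke continuity (intermediate value theorem) for existence, and then use $\partial I^{(0)}_{p}/\partial\mu_{p}=-\Lfun{0}{ps}<0$ for strict monotonicity to obtain uniqueness and the sign structure of Eq.~(\ref{lemma1}). Your treatment is in fact slightly more careful than the paper's, since you explicitly invoke dominated convergence for the limit interchange and flag the need for strict inequalities at both endpoints (i.e., that $f_{s}$ is neither identically $0$ nor $1$ on $\mathrm{supp}\,\myT_{ps}$), which the paper leaves implicit.
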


We also note that the number current [$\mu_{p}=M(T_{p})$] contour is vertical when the temperature
approaches absolute zero, as shown in Fig.\ \ref{fig:Lemma1},
since $\Lfun{1}{ps}/T_{p}\to0$ as $T_{p}\to0$, and implies a vanishing Seebeck coefficient for the
probe-system junction near absolute zero.

An ``ideal potentiometer" was initially proposed \cite{Engquist81} by merely requiring $I^{(0)}_{p}=0$. Subsequently,
B{\"u}ttiker \cite{Buttiker1988,Buttiker89} clarified that this definition holds only near absolute zero due to the absence of thermoelectric corrections.
Such a voltage probe %precludes thermoelectric corrections, and 
determines the voltage uniquely at zero temperature in the linear response regime,
% (corresponds to the point where the blue curve cuts the voltage axis in Fig.\ \ref{fig:Lemma1}), 
and is relevant for experiments in mesoscopic circuits \cite{Benoit1986,Shepard1992,Picciotto2001,Gao2005} which
are carried out at cryogenic temperatures.  However, at higher temperatures and/or larger bias voltages, where the sample may be heated by both the Joule and
Peltier effects, thermoelectric corrections to voltage measurements must be considered.  Indeed, 
Bergfield and Stafford \cite{Bergfield2014} argue that an {\em ideal voltage probe}
must be required to equilibrate thermally with the system ($I^{(1)}_{p}=0$), without which ``a voltage will develop across the system-probe
junction due to the {\em Seebeck effect}." 

Voltage probes have been used extensively in the theoretical literature to mimic the effects of various scattering processes,
such as
inelastic scattering \cite{Buttiker1985b,Buttiker1986b,Buttiker1988,D'Amato1990,Ando1996,Roy2007} 
and dephasing \cite{deJong1996,vanLangen1997,Forster2007} in mesoscopic systems.
% ,Roy2007,Bedkihal2013}.
A modern variation of B{\"u}ttiker's voltage probe, additionally requiring that the probe exchange no heat current, 
has been used to model inelastic scattering in
quantum transport problems at finite temperature
\cite{Jacquet2009,Sanchez2011,Saito2011,Balachandran2013,Brandner2013}. The probe technique, as a model for scattering,
has also been extensively studied beyond the linear response regime
\cite{Bandyopadhyay2011,Bedkihal2013,Bedkihal2013b}.

Lemma \ref{LemmaSecondLawNumberCurrent1} implies that
a ``voltage probe" (defined only by $I^{(0)}_{p}=0$) requires the simultaneous
specification of a probe temperature $T_{p}$ so that $\mu_{p}=M(T_{p})$ is uniquely determined. 
Fig.\ \ref{fig:Lemma1} illustrates that the measured voltage shows a large dependence on the probe temperature.
Therefore, it is important to define a simultaneous temperature measurement by imposing $I^{(1)}_{p}(\mu_{p},T_{p})=0$.

\begin{LemmaSecondLawHeatCurrent}
\label{LemmaSecondLawHeatCurrent1}
The heat current contour defined by $I_{p}^{(1)}(\mu_{p},T_{p})=c$, where c is some constant,
obeys the second law of thermodynamics, namely,
\begin{equation}
\begin{aligned}
I^{(1)}_{p}(\mu_{p},T_{p}')&>c,\ \text{if}\ T_{p}'<{T}_{p}\\
&<c,\ \text{if}\ T_{p}'>{T}_{p}.
\label{SecondLawHeatCurrent}
\end{aligned}
\end{equation}
\end{LemmaSecondLawHeatCurrent}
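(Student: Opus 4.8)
The plan is to mirror the proof of Lemma~\ref{LemmaSecondLawNumberCurrent1}, except that here I integrate the heat-current gradient along a \emph{vertical} (constant-$\mu_p$) contour rather than a horizontal one. Fix $\mu_p$ and a reference temperature $T_p$, and set $c \equiv I^{(1)}_p(\mu_p,T_p)$, so that $(\mu_p,T_p)$ lies on the contour $I^{(1)}_p=c$. For an arbitrary $T_p'\in(0,\infty)$ I would express the change in heat current as the line integral of $\nabla I^{(1)}_p$ from $(\mu_p,T_p)$ to $(\mu_p,T_p')$ along $\mathbf{dl}=(0,dT_p)$. By Eq.~(\ref{HeatCurrentGradient}) only the second component contributes, so
\[
\Delta I^{(1)}_p = I^{(1)}_p(\mu_p,T_p') - c = \int_{T_p}^{T_p'}\!\left(-\frac{\Lfun{2}{ps}(\mu_p,T)}{T}\right)dT .
\]

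The key step is the positivity already recorded above: $\Lfun{2}{ps}(\mu_p,T)>0$ for all $T\in(0,\infty)$, since $\myT_{ps}\geq 0$ has strictly positive measure while the Fermi-derivative factor $-\partial f_p/\partial\omega$ is strictly positive; together with $T>0$ this makes the integrand strictly negative on the whole interval. Integrating upward ($T_p'>T_p$) then gives $\Delta I^{(1)}_p<0$, i.e.\ $I^{(1)}_p(\mu_p,T_p')<c$; integrating downward ($T_p'<T_p$) reverses the orientation of the integral and gives $\Delta I^{(1)}_p>0$, i.e.\ $I^{(1)}_p(\mu_p,T_p')>c$. This is precisely Eq.~(\ref{SecondLawHeatCurrent}).

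As a cross-check, and to make the Clausius-type content transparent, I would also exhibit the explicit form obtained directly from Eq.~(\ref{Rearraged}): at fixed $\mu_p$ only the probe distribution depends on temperature, hence
\[
\Delta I^{(1)}_p = \frac{1}{h}\int_{-\infty}^{\infty}\!d\omega\,(\omega-\mu_p)\,\myT_{ps}(\omega)\big[f_p(\mu_p,T_p;\omega)-f_p(\mu_p,T_p';\omega)\big].
\]
Raising the temperature depletes the probe occupation below $\mu_p$ and enhances it above, so the bracket carries the opposite sign to $(\omega-\mu_p)$ pointwise; the integrand therefore has a definite sign wherever $\myT_{ps}(\omega)>0$, recovering the same conclusion without reference to the gradient field. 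This parallels the explicit display in Eq.~(\ref{SecondLawNumberCurrent}).

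I do not anticipate a genuine obstacle: the statement is an immediate corollary of the positivity of $\Lfun{2}{ps}$. The only points needing care are (i) getting the orientation of the line integral correct in the $T_p'<T_p$ case, and (ii) noting that $\Lfun{2}{ps}$ is finite and smooth in $T_p$ for every $T_p>0$ --- which holds because the Fermi-derivative tails decay exponentially, so the integral in Eq.~(\ref{eq:Onsager}) converges even without the finite-measure hypotheses of Postulate~\ref{postulate}. In contrast to Lemma~\ref{LemmaSecondLawNumberCurrent1}, no existence-of-solution argument is needed, since the lemma asserts only the monotonicity of $I^{(1)}_p$ in $T_p$ at fixed $\mu_p$.
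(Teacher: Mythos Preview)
Your proposal is correct and follows essentially the same approach as the paper: both integrate $\nabla I^{(1)}_p$ along a constant-$\mu_p$ contour, invoke the strict positivity of $\Lfun{2}{ps}$, and display the explicit integral $\frac{1}{h}\int d\omega\,(\omega-\mu_p)\myT_{ps}(\omega)[f_p(\mu_p,T_p;\omega)-f_p(\mu_p,T_p';\omega)]$ to read off the sign. Your remarks about orientation, finiteness of $\Lfun{2}{ps}$, and the absence of an existence step are all apt and match the paper's treatment.
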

\begin{proof}
We follow an analogous argument to lemma \ref{LemmaSecondLawNumberCurrent1}, and show
the monotonicity of $I_{p}^{(1)}(\mu_{p},T_{p})$ along a certain contour in the $\mu_{p}$-$T_{p}$ plane.
Naturally, the contour we choose is along a fixed $\mu_{p}$ [cf.\ Eq.\ (\ref{HeatCurrentGradient})]
since we know that $\Lfun{2}{ps}$ is positive. Therefore we have $\Delta{I^{(1)}_{p}}=I^{(1)}_{p}({\mu}_{p},{T}_{p}')-I^{(1)}_{p}({\mu}_{p},{T}_{p})
=\int_{{T}_{p}}^{T_{p}'}\nabla{I^{(1)}_{p}}.\mathbf{dl}$, where $\mathbf{dl}=(0,dT_{p})$ and explicitly,
\begin{equation}
\begin{aligned}
\Delta{I^{(1)}_{p}}&=\frac{1}{h}\int_{-\infty}^{\infty}d\omega(\omega-{\mu}_p)\myT_{ps}(\omega)[{f}_{p}({\mu}_{p},{T}_{p};\omega)\\
&\ \ \ \ \ \ \ \ \ \ \ \ \ \ \ \ \ \ \ \ \ \ \ \ \ \ \ \ \ \ \ \ \ \ \ \ \ \ \ \ -f_{p}({\mu}_{p},{T}_{p}';\omega)]\\
&>0,\ \text{if}\ T_{p}'<{T}_{p}\\
&<0,\ \text{if}\ T_{p}'>{T}_{p}.
\label{SecondLawHeatCurrent2}
\end{aligned}
\end{equation}
This implies Eq.\ (\ref{SecondLawHeatCurrent}). 
\end{proof}
\begin{figure}
\captionsetup{justification=raggedright,
singlelinecheck=false
}
\includegraphics[width=3.4in]{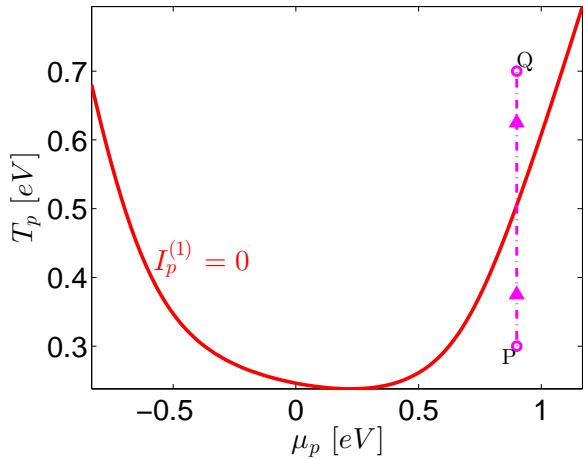}
\caption{\label{fig:Lemma2}Illustration of Lemma \ref{LemmaSecondLawHeatCurrent1}: The contour $PQ$ shown in magenta
cuts $I^{(1)}_{p}=0$ (or any $I^{(1)}_{p}=\text{constant}$) exactly once. Contour $PQ$ is defined along constant voltage $\mu_{p}=\text{constant}$, and
illustrates the Clausius statement:
The heat current is monotonically decresing along $PQ$. The system and bias conditions are detailed in Sec.\ \ref{sec:TwoLevelSystem}.
% When the probe temperature bias corresponds to point $P$,
% the heat current flows into the probe ($I^{(1)}_{p}>0$) implying that the system is ``hotter". Likewise when the
% probe temperature bias corresponds to point $Q$ on the contour, the heat current flows out of the probe ($I^{(0)}_{p}<0$) implying
% that the system is ``colder". 
}
\end{figure}

We stated lemma \ref{LemmaSecondLawHeatCurrent1} with a constant $c$ \footnote{Furthermore, the tangent vector [cf.\ Eq.\ (\ref{HeatCurrentTV})] along $I^{(1)}_{p}=c$
cannot be of magnituge zero since $\Lfun{2}{ps}$ is strictly positive for $T_{p}\in(0,\infty)$.
Therefore, the contour $I^{(1)}_{p}=c$ doesn't terminate for finite values of $T_{p}$ and $\mu_{p}$.
This implies the existence of a function $\tau_{c}: (-\infty,\infty)\rightarrow (0,\infty)$ which 
defines
\begin{equation}
T_{p}=\tau_{c}(\mu_{p})
\end{equation}
for each point on $I^{(1)}_{p}(\mu_{p},T_{p})=c$. },
not necessarily $c=0$, unlike lemma \ref{LemmaSecondLawNumberCurrent1}. This is because we do {\em not} {\it a priori} know whether the contour
$I^{(1)}_{p}=0$ exists, and we derive a necessary and sufficient condition for its existence in Sec \ref{sec:existence}.

Analogous to Lemma \ref{LemmaSecondLawNumberCurrent1}, Lemma \ref{LemmaSecondLawHeatCurrent1} implies that a ``temperature probe" \cite{Engquist81}
(defined only by $I^{(1)}_{p}=0$) requires the simultaneous specification of a 
probe voltage $\mu_{p}$ so that the temperature $T_{p}=\tau_{0}(\mu_{p})$ (cf.\ footnote \footnotemark[2]) is uniquely determined.
Fig.\ \ref{fig:Lemma2} illustrates that the measured temperature shows a large dependence on the probe voltage. Therefore, it becomes
important to simultaneously measure the voltage by imposing $I^{(0)}_{p}=0$.
If the temperature probe is not allowed to equilibrate electrically with the system, then a temperature difference will build up across the probe-system
junction due to the {\em Peltier effect}, leading to an error in the temperature measurement.

Clearly, depending upon the probe voltage, the ``temperature probe" could measure any of a range of values, rendering
the measurement somewhat meaningless (see Fig.\ \ref{fig:Lemma2}). 
Analogously, the ``voltage probe" could measure any of a range of values depending upon the probe temperature (see Fig.\ \ref{fig:Lemma1}).
{\em Thermoelectric probes} (also referred to as dual probes, and voltage-temperature probes) treat
temperature and voltage measurements on an equal footing, and implicitly account for the thermoelectric corrections exactly.
Only such a dual probe is in {\em both thermal and electrical equilibrium} with the system being measured, and therefore yields an unbiased measurement
of both quantities.
A mathematical proof of the uniqueness of a voltage and temperature measurement is 
therefore of fundamental importance.

We may also deduce that $T_{p}=0$ cannot be obtained as a measurement outcome since
\begin{equation}
\label{ThirdLaw}
\begin{aligned}
\lim_{{T_{p}}\to{0}}I^{(1)}_{p}(\mu_{p},T_{p})&= \int_{-\infty}^{\infty}d\omega\ (\omega-\mu_{p})\myT_{ps}(\omega)[f_{s}(\omega)\\
&\ \ \ \ \ \ \ \ \ \ \ \ \ \ \ \ \ \ \ \ \ -\lim_{T_{p}\to0}f_{p}(\mu_{p},T_{p})]\\
&=\int_{-\infty}^{\mu_{p}}d\omega\ (\omega-\mu_{p})\myT_{ps}(\omega)(f_{s}(\omega)-1)\\ &\ \ \ \ + \int_{\mu_{p}}^{\infty}d\omega\ (\omega-\mu_{p})\myT_{ps}(\omega)f_{s}(\omega)\\
&>0,
\end{aligned}
\end{equation}
{\em consistent with the third law of thermodynamics}. However, temperatures arbitarily close to absolute zero are, in principle, possible \cite{Shastry2015}. 

Lemmas \ref{LemmaSecondLawNumberCurrent1} and \ref{LemmaSecondLawHeatCurrent1} are equivalent to the Clausius statement of the second law 
\cite{Clausius1854}: 
``No process is possible whose sole effect is to transfer heat from a colder body to a warmer body.''
%``Heat can never pass from a colder to a warmer body without some other change, connected therewith, occurring at the same time."
Lemma \ref{LemmaSecondLawHeatCurrent1} gives us the direction in which heat will flow [cf.\ Eq.\ (\ref{SecondLawHeatCurrent2})] when the probe is biased away from thermal equilibrium $I^{(1)}_{p}(\mu_{p},T_{p})=0$:
whenever the probe is hotter than the temperature corresponding to local thermal equilibrium, with the chemical potential held constant, heat flows {\em out} of the probe and vice versa.
Similarly, Lemma \ref{LemmaSecondLawNumberCurrent1} gives us the direction in which particle flow occurs when the probe is biased away from electrical equilibrium $I^{(0)}_{p}(\mu_{p},T_{p})=0$:
whenever the probe is at a higher chemical potential than the one corresponding to local electrical equilibrium, with temperature held constant, particles flow {\em out} of the probe and vice versa.

The problem of a unique measurement of a ``voltage probe" (defined only by $I^{(0)}_{p}=0$), or a ``temperature probe" (defined only by $I^{(1)}_{p}=0$)
has been attempted previously by Jacquet and Pillet \cite{Jacquet2012} for transport beyond linear response, and to our knowledge
is the only work in this direction.
However, in Ref.\ \cite{Jacquet2012}, the bias conditions considered are quite restrictive and the result assumes noninteracting electrons.
Lemma \ref{LemmaSecondLawNumberCurrent1} and lemma \ref{LemmaSecondLawHeatCurrent1}, respectively, generalize the result to 
arbitrary bias conditions, and arbitary interactions within a quantum electron system while also
providing a useful insight via the Clausius statement of the second law of thermodynamics.
However, the question we would like to answer in this article pertains to the uniqueness of a {\em thermoelectric probe} measurement, defined by both $I^{(0)}_{p}=0$
and $I^{(1)}_{p}=0$. A result for such dual probes has been obtained only in the linear response regime
and for noninteracting electrons \cite{Jacquet2009}.

\begin{TheoremOnsager}
\label{ThmOnsager}
The coefficients $\Lfun{\nu}{ps}$ satisfy the inequality
\begin{equation}
\Lfun{0}{ps}\Lfun{2}{ps}-\big(\Lfun{1}{ps}\big)^{2}>0.
\end{equation}
\end{TheoremOnsager}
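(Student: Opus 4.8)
The plan is to recognize the three coefficients $\Lfun{0}{ps}$, $\Lfun{1}{ps}$, $\Lfun{2}{ps}$ as the zeroth, first, and second moments of one and the same nonnegative measure on $\mathbb{R}$, and then to invoke the Cauchy--Schwarz inequality. First I would observe that, by Eq.\ (\ref{tps_positive}), $\myT_{ps}(\omega)\geq 0$, while the Fermi-derivative factor $-\partial f_{p}/\partial\omega$ is strictly positive for every $\omega\in\mathbb{R}$ and every $T_{p}\in(0,\infty)$; hence
\begin{equation}
d\sigma(\omega)\equiv\frac{1}{h}\,\myT_{ps}(\omega)\left(-\frac{\partial f_{p}}{\partial\omega}\right)d\omega
\end{equation}
defines a nonnegative Borel measure on $\mathbb{R}$ of finite total mass $\int d\sigma=\Lfun{0}{ps}>0$ (finiteness by the footnote to Eq.\ (\ref{eq:Onsager}), positivity as already noted at the start of this section). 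In this language $\Lfun{\nu}{ps}=\int(\omega-\mu_{p})^{\nu}\,d\sigma(\omega)$ for $\nu=0,1,2$, so that the combination in the statement is, up to the factor $(\Lfun{0}{ps})^{2}$, precisely the variance of $\omega$ under the probability measure $d\sigma/\Lfun{0}{ps}$.

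The second step is the inequality itself. Applying Cauchy--Schwarz in $L^{2}(d\sigma)$ to the pair of functions $\omega\mapsto(\omega-\mu_{p})$ and $\omega\mapsto 1$ gives
\begin{equation}
\big(\Lfun{1}{ps}\big)^{2}=\left(\int(\omega-\mu_{p})\,d\sigma\right)^{2}\leq\int(\omega-\mu_{p})^{2}\,d\sigma\,\int d\sigma=\Lfun{0}{ps}\Lfun{2}{ps},
\end{equation}
which is the claimed bound except for the matter of strictness.

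The step I expect to be the crux is upgrading ``$\leq$'' to ``$<$''. Equality in Cauchy--Schwarz would force $\omega-\mu_{p}$ to be $d\sigma$-almost-everywhere equal to a constant, i.e.\ it would require $d\sigma$ to be concentrated at a single point. But Postulate \ref{postulate}, Eq.\ (\ref{finiteParticleRate}), guarantees $\int\myT_{ps}(\omega)\,d\omega>0$, so the set $\{\omega:\myT_{ps}(\omega)>0\}$ has positive Lebesgue measure; combined with the everywhere-strict positivity of $-\partial f_{p}/\partial\omega$, this shows $d\sigma$ assigns positive mass to a set of positive Lebesgue measure and is therefore not a point mass. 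Consequently $\omega-\mu_{p}$ is non-constant on a set of positive $d\sigma$-measure, the Cauchy--Schwarz inequality is strict, and $\Lfun{0}{ps}\Lfun{2}{ps}-\big(\Lfun{1}{ps}\big)^{2}>0$. (An equivalent way to package the whole argument, avoiding an explicit appeal to the equality case, is to write the difference directly as the manifestly positive double integral $\tfrac{1}{2h^{2}}\iint \myT_{ps}(\omega)\myT_{ps}(\omega')(-\partial_{\omega}f_{p})(-\partial_{\omega'}f_{p})(\omega-\omega')^{2}\,d\omega\,d\omega'$ and then argue this is strictly positive using the same non-degeneracy of $\myT_{ps}$.)
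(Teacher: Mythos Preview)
Your proof is correct and is essentially the same as the paper's: both apply the Cauchy--Schwarz inequality, the paper to the pair $g(\omega)=\sqrt{\myT_{ps}(-\partial_{\omega}f_{p})}$ and $h(\omega)=(\omega-\mu_{p})g(\omega)$ in $L^{2}(d\omega)$, you to the pair $1$ and $\omega-\mu_{p}$ in $L^{2}(d\sigma)$, which is the same estimate after absorbing the weight into the measure. For strictness the paper argues linear independence of $g,h$ and explicitly flags the narrowband (Dirac-delta) coupling as the exceptional case, whereas you reach the same conclusion by invoking Postulate~\ref{postulate} to ensure $\myT_{ps}$ is a genuine nonnegative measurable function whose support has positive Lebesgue measure, so that $d\sigma$ cannot be a point mass.
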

\begin{proof}
We may define functions $g(\omega)$ and $h(\omega)$ as
\begin{equation}
g(\omega)=\sqrt{\myT_{ps}(\omega)\bigg(-\frac{\partial{f_{p}}}{\partial{\omega}}\bigg)}
\end{equation}
and
\begin{equation}
h(\omega)=(\omega-\mu_{p})\sqrt{\myT_{ps}(\omega)\bigg(-\frac{\partial{f_{p}}}{\partial{\omega}}\bigg)}.
\end{equation}
We note that $g(\omega)$ and $h(\omega)$ belong to $\mathbf{L}^{2}(\mathbb{R})$ \footnotemark[1].
Noting that $g$ and $h$ are real, we apply the Cauchy-Schwarz inequality
\begin{equation}
\bigg|\int_{-\infty}^{\infty}d\omega g(\omega)h(\omega)\bigg|^{2}\leq \int_{-\infty}^{\infty}d\omega|g(\omega)|^{2}\int_{-\infty}^{\infty}d\omega|h(\omega)|^{2}.
\end{equation}
The integral appearing on the $lhs$ is $\Lfun{1}{ps}$, while on the $rhs$ we have the product of $\Lfun{0}{ps}$
and $\Lfun{2}{ps}$, respectively. We drop the absolute value on the $lhs$ by noting that $\Lfun{1}{ps}$ is real and write 
\begin{equation}
\big(\Lfun{1}{ps}\big)^{2}\leq\Lfun{0}{ps}\Lfun{2}{ps}.
\end{equation}
We drop the equality case above by noting that $g$ and $h$ are linearly independent except for the trivial case when
$\myT_{ps}(\omega)=0\ \forall \omega$, or when the probe coupling is narrowband
[$\myT_{ps}(\omega)=\bar{\gamma}\delta(\omega-\omega_{0})$] which we discuss in sec \ref{sec:Narrowband}.
\end{proof}

The proof above can be easily extended to show the positive-definiteness
of the linear response matrices \cite{Onsager1931}  
widely used for elastic 
transport calculations (e.g., in Refs.\ \onlinecite{Sivan1986,Bergfield10b}). Theorem \ref{ThmOnsager} implies a positive thermal conductance 
(see e.g., Ref.\ \cite{Bergfield10b}), which is necessary for positive entropy production consistent with the second law of thermodynamics.

\subsection{Uniqueness}

\label{subsec:Uniqueness}
\begin{TheoremUniqueness}
\label{ThmUniqueness}
The local temperature and voltage of a nonequilibrium quantum system, measured by a {\em thermoelectric probe}, is unique when it exists.
\end{TheoremUniqueness}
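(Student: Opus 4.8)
The plan is to collapse the two coupled equations $I_p^{(0)}=I_p^{(1)}=0$ to a single scalar equation and then invoke Theorem~\ref{ThmOnsager}. By Lemma~\ref{LemmaSecondLawNumberCurrent1} the locus $I_p^{(0)}(\mu_p,T_p)=0$ is precisely the graph $\mu_p=M(T_p)$ of a single-valued function $M:(0,\infty)\to\mathbb{R}$; any joint solution of Eq.~(\ref{equilibrium}) must therefore lie on this graph, and it suffices to prove that the restricted heat current $\widetilde{I}(T_p):=I_p^{(1)}(M(T_p),T_p)$ has at most one zero on $(0,\infty)$.

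First I would establish that $M$ is continuously differentiable. Since $\partial I_p^{(0)}/\partial\mu_p=-\Lfun{0}{ps}<0$ everywhere [Eq.~(\ref{NumberCurrentGradient}), together with $\Lfun{0}{ps}>0$], the implicit function theorem applies and gives
\[
M'(T_p)=-\frac{\partial I_p^{(0)}/\partial T_p}{\partial I_p^{(0)}/\partial\mu_p}=-\frac{\Lfun{1}{ps}}{T_p\,\Lfun{0}{ps}}.
\]
(The requisite smoothness of $I_p^{(\nu)}$ and finiteness of the $\Lfun{\nu}{ps}$ need only the mild subexponential growth of $\myT_{ps}$, cf.\ the footnote following Eq.~(\ref{eq:Onsager}), so the argument in fact holds under hypotheses weaker than Postulate~\ref{postulate}.)

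Next I would differentiate $\widetilde{I}$ by the chain rule, using the heat-current gradient in Eq.~(\ref{HeatCurrentGradient}):
\[
\widetilde{I}'(T_p)=\bigl(-\Lfun{1}{ps}-I_p^{(0)}\bigr)M'(T_p)-\frac{\Lfun{2}{ps}}{T_p}.
\]
On the graph $I_p^{(0)}=0$, substituting the expression for $M'(T_p)$, this simplifies to
\[
\widetilde{I}'(T_p)=\frac{(\Lfun{1}{ps})^{2}-\Lfun{0}{ps}\Lfun{2}{ps}}{T_p\,\Lfun{0}{ps}}<0,
\]
where the strict inequality is exactly Theorem~\ref{ThmOnsager} (together with $\Lfun{0}{ps}>0$ and $T_p>0$). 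Hence $\widetilde{I}$ is strictly decreasing on $(0,\infty)$ and vanishes at most once; equivalently, the contour $I_p^{(1)}=0$ meets the contour $I_p^{(0)}=0$ at most once, and transversally. Therefore, whenever a simultaneous solution of Eq.~(\ref{equilibrium}) exists, it is unique, which is the assertion of Theorem~\ref{ThmUniqueness}.

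The conceptual heart of the proof --- and the only step that is not bookkeeping --- is the observation that the quantity controlling the sign of $\widetilde{I}'$ along the number-current contour is precisely the Onsager determinant $\Lfun{0}{ps}\Lfun{2}{ps}-(\Lfun{1}{ps})^{2}$, so that the Cauchy--Schwarz inequality of Theorem~\ref{ThmOnsager} is exactly what forces a single, transversal crossing. The remaining potential obstacle is purely technical: one must know that $\mu_p=M(T_p)$ is globally a graph over all of $T_p\in(0,\infty)$ (supplied by Lemma~\ref{LemmaSecondLawNumberCurrent1}) and that it is smooth enough to differentiate along (supplied by the implicit function theorem, since $\Lfun{0}{ps}$ never vanishes); neither point presents real difficulty.
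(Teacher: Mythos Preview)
Your proof is correct and follows essentially the same route as the paper: both parametrize the contour $I_p^{(0)}=0$ (the paper by an arc-length parameter $\xi$ increasing with $T_p$, you by $T_p$ itself via the graph $M(T_p)$ from Lemma~\ref{LemmaSecondLawNumberCurrent1}), compute the directional derivative of $I_p^{(1)}$ along it, and find that its sign is governed by $(\Lfun{1}{ps})^{2}-\Lfun{0}{ps}\Lfun{2}{ps}<0$ from Theorem~\ref{ThmOnsager}. Your use of the implicit function theorem to justify differentiability of $M$ is a nice explicit touch that the paper's tangent-vector formulation leaves implicit, but the argument is otherwise the same.
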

\begin{proof}
The tangent vectors $\mathbf{t}^{(\nu)}$ for $I^{(\nu)}_{p}$ are along
\begin{equation}
\mathbf{t}^{(0)}=\bigg(-\frac{\Lfun{1}{ps}}{T_{p}},\Lfun{0}{ps}\bigg)
\end{equation}
and
\begin{equation}
\label{HeatCurrentTV}
\begin{aligned}
\mathbf{t}^{(1)}&=\bigg(\frac{\Lfun{2}{ps}}{T_{p}},-\Lfun{1}{ps}-I^{(0)}_{p}\bigg)\\
&=\bigg(\frac{\Lfun{2}{ps}}{T_{p}},-\Lfun{1}{ps}\bigg),\ \text{if}\ I^{(0)}_{p}=0,
\end{aligned}
\end{equation}
respectively, such that we have
\begin{equation}
\int_{s_{1}}^{s_{2}}d{s}\frac{\mathbf{t}^{(\nu)}\cdot\nabla{I^{(\nu)}_{p}}}{|\mathbf{t}^{(\nu)}|}=0,
\end{equation}
where $s$ is a scalar that labels points along the contour $I^{(\nu)}_{p}=\text{constant}$.

We now compute the change in $I^{(1)}_{p}$ along the contour $I^{(0)}_{p}=0$. The points along $I^{(0)}_{p}=0$ are labeled
by the continuous parameter $\xi$ such that $\mu_{p}=\mu_{p}(\xi)$ and $T_{p}=T_{p}(\xi)$. $\xi$ is chosen to be 
increasing with increasing temperature. The change $\Delta{I^{(1)}_{p}}$ becomes
\begin{equation}
\begin{aligned}
\label{eqn:Onsager}
\Delta{I^{(1)}_{p}}&=\int_{\xi_{1}}^{\xi_{2}}d\xi\frac{\mathbf{t}^{(0)}\cdot\nabla{I^{(1)}_{p}}}{|\mathbf{t}^{(0)}|}\\
&=\int_{\xi_{1}}^{\xi_{2}}d\xi\frac{1}{|\mathbf{t}^{(0)}|T_{p}}\big((\Lfun{1}{ps})^{2}-\Lfun{0}{ps}\Lfun{2}{ps}\big)\\
&>0\ \text{if}\ \xi_{2}<\xi_{1}\\
&<0\  \text{if}\ \xi_{2}>\xi_{1},
\end{aligned}
\end{equation}
due to theorem \ref{ThmOnsager}. Therefore $I^{(1)}_{p}=0$ (or for that matter $I^{(1)}_{p}=c$, for any $c$) 
is satisfied at most at a single point along $I^{(0)}_{p}=0$.
\end{proof}

Theorem \ref{ThmOnsager} is a form of the second law of thermodynamics that gives us the direction in which the heat current flows
along the contour $I^{(0)}_{p}=0$ (cf.\ Eq.\ (\ref{eqn:Onsager})). The heat current
$I^{(1)}_{p}$ decreases monotonically along the contour $I^{(0)}_{p}=0$.
Therefore we may find only one point along $I^{(0)}_{p}=0$
that also satisfies $I^{(1)}_{p}=0$ , which implies a unique solution to Eq.\ (\ref{equilibrium}) when it exists.

Indeed, Onsager points out in his 1931 paper \cite{Onsager1931} that for positive entropy production,
the linear response matrix will have to be positive-definite (which translates to our condition in Theorem \ref{ThmOnsager}).
However, that analysis rests upon the assumption of linear response near equilibrium. % (local equilibrium hypothesis).
Our result in Theorem \ref{ThmOnsager} does not require such
a condition for the nonequilibrium state of the system, but instead emerges out of the analysis of the currents flowing into a
weakly-coupled probe. In addition, we obtain a strict mathematical proof of theorem \ref{ThmOnsager}. We point out that theorem \ref{ThmOnsager}
holds even when the physically expected postulate \ref{postulate} fails, making the uniqueness result in theorem \ref{ThmUniqueness}
very general \footnotemark[1].

\begin{figure*}[tbh]
\centering
\captionsetup{justification=raggedright,
singlelinecheck=false
}
{\includegraphics[width=3.4in]{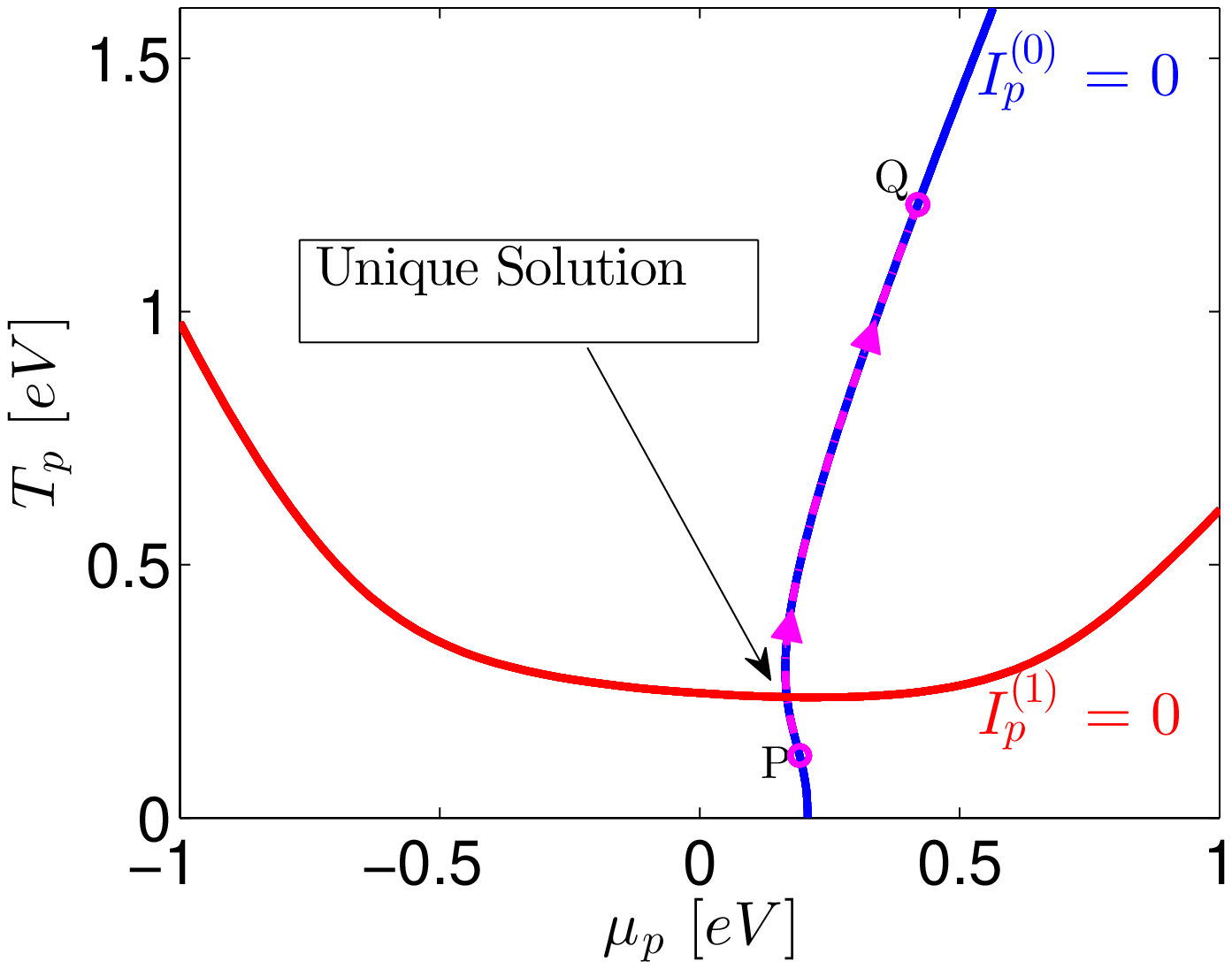}}
{\includegraphics[width=3.4in]{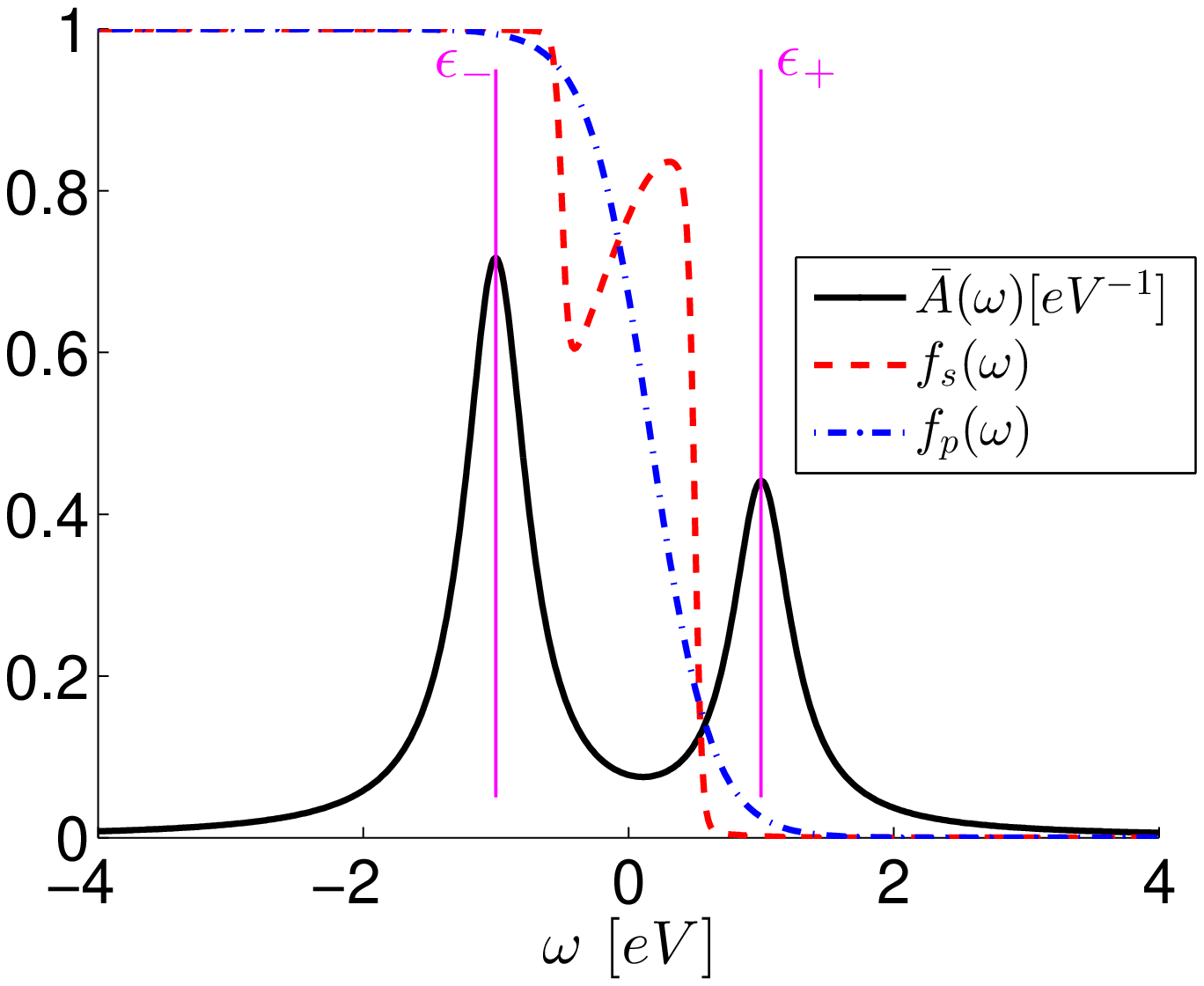}}
\caption{Left panel: Illustration of Theorem \ref{ThmUniqueness} for positive temperatures. The contour $PQ$
along $I^{(0)}_{p}=0$ (shown in blue) cuts the contour $I^{(1)}_{p}=0$ (shown in red) exactly once.
Contour $PQ$ illustrates a certain statement of the second law of thermodynamics: The heat current is monotonically decreasing along $PQ$
(thus implying uniqueness).
Right panel: The local spectrum sampled by the probe $\bar{A}(\omega)$ (black), the nonequilibrium distribution function $f_{s}(\omega)$ (red),
and the probe Fermi-Dirac distribution $f_{p}(\omega)$ (blue) corresponding to the unique solution in the left panel. 
The resonances in the spectrum $\bar{A}(\omega)$ correspond to the eigenstates
of the closed two-level Hamiltonian (see Sec. \ref{sec:TwoLevelSystem}) $\epsilon_{\pm}=\pm 1$ shown in magenta. 
The Fermi-Dirac distribution
is monotonically decreasing with energy, and corresponds to a situation with positive temperature (no net population inversion).
The necessary and sufficient condition for the existence of a positive temperature solution is stated in Theorem \ref{exists}.
}
\label{fig:Theorem2}

\end{figure*}

\section{Existence}
\label{sec:existence}

A unique local measurement of temperature and voltage is only part of our main problem. An equally important part is to derive the conditions
for the existence of a solution. The main idea behind this analysis is to follow the number current contour $I^{(0)}_{p}=0$ and
ask what happens to the heat current $I^{(1)}_{p}$ as we traverse towards higher and higher temperatures, $T_{p}\to\infty$. 
We noted that near $T_p=0$,
the heat current into the probe must be positive, consistent with the third law of thermodynamics [cf.\ Eq.\ (\ref{ThirdLaw})].
Since we know that the heat current is monotonically decreasing along the number current contour (Theorem \ref{ThmUniqueness}), 
we could guess whether or not a solution occurs
depending upon the asymptotic value of the heat current along that contour as $T_{p}\to\infty$. In this way, we find a necessary and sufficient condition
for the existence of a solution while analyzing the problem for positive temperatures (see Fig.\ \ref{fig:Theorem2} for an illustration of this case).
On the other hand, when this condition is not met, one can 
immediately prove that a negative temperature must satisfy the measurement condition $I^{(\nu)}_{p}=0,\ \nu=\{0,1\}$. This latter condition corresponds
to a system exhibiting local population inversion which leads to negative temperature \cite{Pathria2011}
solutions, as illustrated in Fig.\ \ref{fig:Corollary3}.
%as expected from standard statistical physics.

Our results here are again completely general and are valid for electron systems with arbitrary interactions, arbitrary steady state bias conditions,
and for any weakly-coupled probe. However, our analysis here leads us to demarcate
between two extremes of the probe-system coupling. We conclude that an {\em ideal probe} is one which operates in the {\em broadband limit}. A measurement by
such a probe depends only on the properties of the system that it couples to, and is independent of the spectral properties of the probe itself.
The broadband limit lends itself to an easier physical interpretation of the population inversion condition as well, and we discuss this important limit
in Sec.\ \ref{subsec:Broadband}.
The other extreme is that of a {\em narrowband probe} which is capable of probing the system at just one value of energy, 
%as a result, is completely insensitive to the spectral distribution of the system which it probes. This leads 
leading to a %completely 
nonunique measurement (see also the proof of theorem \ref{ThmOnsager}),
and is discussed in Sec.\ \ref{sec:Narrowband}. Only this pathological case leads to an exception to Theorem \ref{ThmUniqueness}.
%Perhaps this extreme limit of a narrowband probe can simply be dubbed as the {\em pathologically nonideal probe.}

The simplest system which could, in principle, exhibit population inversion is a two-level system. Therefore, our results,
including that of the previous section, have been illustrated by using a two-level system. The details of the nonequilibrium
two-level system and its coupling to the thermoelectric probe are given in Sec.\ \ref{sec:TwoLevelSystem}.

Our analysis starts with a rearragement of the currents given by 
Eq.\ (\ref{Rearraged}) and a restatement of the measurement condition [cf.\ Eq.\ (\ref{equilibrium})]
in terms of energy currents, and we
also define some useful quantities along the way. We may rewrite the number current in Eq.\ (\ref{Rearraged})
as
\begin{equation}
I_{p}^{(0)}=\langle\dot{N}\rangle{|}_{f_{s}}-\langle\dot{N}\rangle{|}_{f_{p}}
\end{equation}
where
\begin{equation}
\langle\dot{N}\rangle{|}_{f_{s}}\equiv\frac{1}{h}\int_{-\infty}^{\infty}d\omega\myT_{ps}(\omega)f_{s}(\omega),
\label{N_fs}
\end{equation}
and similarly
\begin{equation}
\langle\dot{N}\rangle{|}_{f_{p}}\equiv\frac{1}{h}\int_{-\infty}^{\infty}d\omega\myT_{ps}(\omega)f_{p}(\omega).
\label{N_fp}
\end{equation}
The quantitity $\langle\dot{N}\rangle{|}_{f_{s}}$ is the
rate of particle flow into the probe from the system, while $\langle\dot{N}\rangle{|}_{f_{p}}$ gives the rate of particle flow out of the probe
and into the system.

Similarly, the rate of energy flow into the probe from the system is
\begin{equation}
\langle\dot{E}\rangle{|}_{f_{s}}\equiv\frac{1}{h}\int_{-\infty}^{\infty}d\omega\ \omega\myT_{ps}(\omega)f_{s}(\omega),
\label{E_fs}
\end{equation}
while
\begin{equation}
\langle\dot{E}\rangle{|}_{f_{p}}\equiv\frac{1}{h}\int_{-\infty}^{\infty}d\omega\ \omega\myT_{ps}(\omega)f_{p}(\omega)
\label{E_fp}
\end{equation}
gives the rate of energy outflux from the probe back into the system. The net energy current flowing into the probe
is given by $I^{E}_{p}=\langle\dot{E}\rangle{|}_{f_{s}}-\langle\dot{E}\rangle{|}_{f_{p}}$.

The local equilibration conditions in Eq.\ (\ref{equilibrium}) now become
\begin{equation}
\begin{aligned}
\langle\dot{N}\rangle{|}_{f_{p}}&=\langle\dot{N}\rangle{|}_{f_{s}}\\
\langle\dot{E}\rangle{|}_{f_{p}}&=\langle\dot{E}\rangle{|}_{f_{s}}.
\end{aligned}
\label{NewEquilibrium}
\end{equation}
The equation for the rate of energy flow above is equivalent to the condition $I_{p}^{(1)}=0$ when $I_{p}^{(0)}=0$ since
\begin{equation}
I^{E}_{p}(\mu_{p},T_{p})\equiv\langle\dot{E}\rangle{|}_{f_{s}}-\langle\dot{E}\rangle{|}_{f_{p}}= I_{p}^{(1)} + \mu_{p}I_{p}^{(0)}.
\label{EnergyCurrent}
\end{equation}
The $lhs$ in Eq.\ (\ref{NewEquilibrium})
depends upon the probe parameters (temperature and voltage) while the $rhs$ is fixed for a given nonequilibrium system
with a given local distribution function $f_{s}(\omega)$.
The probe measures the appropriate
voltage and temperature when it exchanges no net charge and energy with the system. 

We may introduce a characteristic rate of particle flow [cf.\ Eq.\ (\ref{finiteParticleRate})] as
\begin{equation}
\begin{aligned}
\langle\dot{N}\rangle{|}_{f\equiv1}&=\frac{1}{h}\int_{-\infty}^{\infty}d\omega\myT_{ps}(\omega)\\
&\equiv\frac{\gamma_{p}}{\hbar}.
\end{aligned}
\label{ParticleRate}
\end{equation}
This leads to the following inequalities:
\begin{equation}
\begin{aligned}
0<&\langle\dot{N}\rangle{|}_{f_{s}}<\frac{\gamma_{p}}{\hbar},\\
0<&\langle\dot{N}\rangle{|}_{f_{p}}<\frac{\gamma_{p}}{\hbar}.
\end{aligned}
\label{ParticleInequality}
\end{equation}
The $lhs$ in the inequality for $\langle\dot{N}\rangle{|}_{f_{s}}$ above excludes $f_{s}(\omega)\equiv0$ while the $rhs$ excludes $f_{s}(\omega)=1\ \forall \omega\in\mathbb{R}$,
and we retain the strict inequalities imposed by Eq.\ (\ref{ParticleInequality}) (see also Eqs.\ (\ref{BoundedParticleRates}) and (\ref{BoundedEnergyRates})
and the preceding discussion).

We similarly introduce a characteristic rate for the energy flow between the system and probe:
\begin{equation}
\begin{aligned}
\langle\dot{E}\rangle{|}_{f\equiv1}&=\frac{1}{h}\int_{-\infty}^{\infty}d\omega\ \omega\myT_{ps}(\omega)\\
&\equiv\frac{\gamma_{p}}{\hbar}\omega_{c},
\end{aligned}
\label{EnergyRate}
\end{equation}
where ${\omega_{c}}<\infty$ (due to postulate \ref{postulate}) 
can interpreted as the centroid of the probe-sample transmission function. We find that $\omega_{c}\to\infty$ necessarily implies a positive temperature
solution. We remind the reader that $\omega_{c}\to-\infty$ is physically impossible due to the principle that
any physical system must have a lower bound for the energy ($\langle H\rangle\geq-c$ for some finite $c\in\mathbb{R}$).

The quantities $\langle\dot{N}\rangle{|}_{f_{s}}$, $\langle\dot{N}\rangle{|}_{f_{p}}$, $\langle\dot{N}\rangle{|}_{f\equiv1}$,
$\langle\dot{E}\rangle{|}_{f_{s}}$,  $\langle\dot{E}\rangle{|}_{f_{p}}$, $\langle\dot{E}\rangle{|}_{f\equiv1}$
are all finite due to postulate \ref{postulate} [cf.\ Eqs.\ (\ref{finiteParticleRate}-\ref{BoundedEnergyRates})].

\subsection{Asymptotic Properties, and Conditions for the Existence of a Solution}

Traversing along $I^{(0)}_{p}=0$ results in a monotonically {\em decreasing} heat current $I^{(1)}_{p}$ (Theorem \ref{ThmUniqueness}).
Here, we traverse the contour from low temperatures ($T_{p}\to0$) to higher temperatures ($T_{p}\to\infty$)
as discussed in Theorem \ref{ThmUniqueness}. This implies
a monotonically {\em increasing} ${\langle\dot{E}\rangle{|}_{f_{p}}}$ due to Eq.\ (\ref{EnergyCurrent}).
We proceed to calculate the asymtotic value of ${\langle\dot{E}\rangle{|}_{f_{p}}}$ along the number current contour.

Let the asymptotic scaling of $\mu_{p}=M(T_{p})$ defined by the contour $I^{(0)}_{p}(\mu_{p},T_{p})=0$ (lemma \ref{LemmaSecondLawNumberCurrent1}) be
\begin{equation}
\lim_{T_{p}\to\infty}\frac{M(T_{p})}{T_{p}}=\Lambda.
\label{AsympoticValue}
\end{equation}
We use the above limiting value to calculate $\langle\dot{N}\rangle{|}_{f_{p}}$ along the contour $\mu_{p}=M(T_{p})$:
\begin{equation}
\begin{aligned}
\lim_{T_{p}\to\infty}\langle\dot{N}\rangle{|}_{f_{p}}&=\frac{1}{h}\int_{-\infty}^{\infty}d\omega\myT_{ps}(\omega)\\
&\ \ \ \ \ \ \ \ \ \ \ \ \ \ \ \ \times\lim_{T_{p}\to\infty}\frac{1}{1+\exp{\bigg({\frac{\omega-M(T_{p})}{T_{p}}}\bigg)}}\\
&=\frac{1}{h}\int_{-\infty}^{\infty}d\omega\myT_{ps}(\omega)\frac{1}{1+\exp{(-\Lambda)}}\\
&=\frac{1}{1+\exp{(-\Lambda)}}\frac{\gamma_{p}}{\hbar}.
\end{aligned}
\end{equation}
The above limiting value satisfies the inequality in Eq.\ (\ref{ParticleInequality}) for any $\Lambda\in\mathbb{R}$.
The points on the contour satisfy $\langle\dot{N}\rangle{|}_{f_{p}}=\langle\dot{N}\rangle{|}_{f_{s}}$ by construction, therefore
$\Lambda$ is computed from the equation 
\begin{equation}
\frac{1}{1+\exp{(-\Lambda)}}\frac{\gamma_{p}}{\hbar}=\langle\dot{N}\rangle{|}_{f_{s}}.
\label{LambdaValue}
\end{equation}

It is important to note that the asymptotic scaling defined by Eq.\ (\ref{AsympoticValue}) does not mean that the scaling is linear.
For example, a sublinear scaling $M(T_{p})=\alpha T_{p}^{n}$ with $n<1$ merely corresponds to $\Lambda=0$ which
could satisfy Eq.\ (\ref{LambdaValue}) if the nonequilibrium system is prepared in that way. 
However, $\Lambda\to\pm\infty$ do not obey the strict inequality in Eq.\ (\ref{ParticleInequality}). $\Lambda\to\infty$ corresponds to
a trivial and unphysical nonequilibrium distribution
$f_{s}(\omega)\equiv1$, and likewise, $\Lambda\to-\infty$ corresponds to $f_{s}(\omega)\equiv0\ \forall\omega$.

The asymtotic value of $\langle\dot{E}\rangle{|}_{f_{p}}$ along the $I^{(0)}_{p}=0$ contour is simply
\begin{equation}
\label{AsymptoticEnergy}
\begin{aligned}
\lim_{T_{p}\to\infty}\langle\dot{E}\rangle{|}_{f_{p}}&=\frac{1}{h}\int_{-\infty}^{\infty}d\omega\ \omega\myT_{ps}(\omega)\\
&\ \ \ \ \ \ \ \ \ \ \ \ \ \ \ \times\lim_{T_{p}\to\infty}\frac{1}{1+\exp\bigg({\frac{\omega-M(T_{p})}{T_{p}}}\bigg)}\\
&=\frac{1}{h}\int_{-\infty}^{\infty}d\omega\ \omega\myT_{ps}(\omega)\frac{1}{1+\exp{(-\Lambda)}}\\
&=\frac{1}{1+\exp{(-\Lambda)}}\frac{\gamma_{p}}{\hbar}\omega_{c}\\
&=\omega_{c}\langle\dot{N}\rangle{|}_{f_{s}}.
\end{aligned}
\end{equation}

\begin{TheoremExistence}
\label{exists}
A positive temperature solution exists if and only if there is no net population inversion, i.e., when
\begin{equation}
\frac{\langle\dot{E}\rangle{|}_{f_{s}}}{\langle\dot{N}\rangle{|}_{f_{s}}}<\omega_{c}.
\end{equation}
\end{TheoremExistence}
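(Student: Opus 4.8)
The plan is to collapse the coupled pair $I^{(0)}_p=I^{(1)}_p=0$ into a one-dimensional monotone problem along the number-current contour and then apply the intermediate value theorem. By Lemma~\ref{LemmaSecondLawNumberCurrent1}, any solution with $T_p>0$ must have $\mu_p=M(T_p)$, so it suffices to examine $I^{(1)}_p$ restricted to this contour as $T_p$ sweeps $(0,\infty)$. On the contour $I^{(0)}_p=0$, Eq.~(\ref{EnergyCurrent}) gives $I^{(1)}_p = I^{E}_p = \langle\dot E\rangle|_{f_s} - \langle\dot E\rangle|_{f_p}$, where $\langle\dot E\rangle|_{f_s}$ is a fixed finite number (Postulate~\ref{postulate}) and only $\langle\dot E\rangle|_{f_p}$ depends on $T_p$. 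Theorem~\ref{ThmUniqueness} already tells us that $I^{(1)}_p$ is \emph{strictly decreasing} along this contour in the direction of increasing $T_p$, equivalently that $\langle\dot E\rangle|_{f_p}$ is strictly increasing.

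Next I would pin down the two endpoint behaviours. As $T_p\to 0^+$ the contour becomes vertical, so $M(T_p)$ tends to a finite limit $M(0^+)$, and Eq.~(\ref{ThirdLaw}) evaluated at $\mu_p=M(0^+)$ gives $\lim_{T_p\to0^+}I^{(1)}_p>0$; by monotonicity $I^{(1)}_p>0$ for all small $T_p$. As $T_p\to\infty$ I would invoke the asymptotics already established in Eqs.~(\ref{AsympoticValue})--(\ref{AsymptoticEnergy}): along $\mu_p=M(T_p)$ one has $\langle\dot E\rangle|_{f_p}\to\omega_c\langle\dot N\rangle|_{f_s}$, so
\[
\lim_{T_p\to\infty}I^{(1)}_p \;=\; \langle\dot E\rangle|_{f_s} - \omega_c\langle\dot N\rangle|_{f_s} \;=\; \langle\dot N\rangle|_{f_s}\!\left(\frac{\langle\dot E\rangle|_{f_s}}{\langle\dot N\rangle|_{f_s}}-\omega_c\right).
\]
Since $\langle\dot N\rangle|_{f_s}>0$ by Eq.~(\ref{ParticleInequality}), the sign of this limit equals the sign of $\langle\dot E\rangle|_{f_s}/\langle\dot N\rangle|_{f_s}-\omega_c$.

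The two implications then fall out immediately. If $\langle\dot E\rangle|_{f_s}/\langle\dot N\rangle|_{f_s}<\omega_c$, then along the contour $I^{(1)}_p$ is continuous, strictly decreasing, positive near $T_p=0$, and has a strictly negative limit as $T_p\to\infty$; the intermediate value theorem produces $T_p^\ast\in(0,\infty)$ with $I^{(1)}_p(M(T_p^\ast),T_p^\ast)=0$, i.e.\ a positive-temperature solution, which is unique by Theorem~\ref{ThmUniqueness}. Conversely, if $\langle\dot E\rangle|_{f_s}/\langle\dot N\rangle|_{f_s}\ge\omega_c$ the limit is $\ge0$, and strict monotonicity forces $I^{(1)}_p(M(T_p),T_p)>0$ for every finite $T_p>0$; since any $T_p>0$ solution lies on the contour, none can satisfy $I^{(1)}_p=0$, so no positive-temperature solution exists. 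This is the claimed equivalence, with ``no net population inversion'' identified with the stated inequality.

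I expect the only genuine work — everything else being a one-line sign comparison plus the intermediate value theorem — to be the rigorous justification of the endpoint limits: that $M(T_p)$ converges as $T_p\to0^+$, so that Eq.~(\ref{ThirdLaw}) may be applied on the contour, and that $M(T_p)/T_p$ has a well-defined limit $\Lambda$ as $T_p\to\infty$ together with the interchange of limit and integral in Eq.~(\ref{AsymptoticEnergy}). The latter I would handle by dominated convergence using the $L^1$ bound from Postulate~\ref{postulate}, exploiting that $\langle\dot N\rangle|_{f_p}$ is constant on the contour and that $x\mapsto(1+e^{-x})^{-1}$ is injective, which simultaneously rules out oscillating subsequences and the boundary cases $\Lambda=\pm\infty$.
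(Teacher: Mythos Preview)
Your proposal is correct and follows essentially the same approach as the paper: restrict to the contour $\mu_p=M(T_p)$ from Lemma~\ref{LemmaSecondLawNumberCurrent1}, use the strict monotonicity of $I^{(1)}_p$ (equivalently of $\langle\dot E\rangle|_{f_p}$) along that contour from Theorem~\ref{ThmUniqueness}, invoke Eq.~(\ref{ThirdLaw}) for the $T_p\to0^+$ endpoint and Eq.~(\ref{AsymptoticEnergy}) for the $T_p\to\infty$ endpoint, and apply the intermediate value theorem. Your treatment is somewhat more explicit than the paper's in separating the two implications and in flagging the technical points (convergence of $M(T_p)$ as $T_p\to0^+$, existence of $\Lambda$, dominated convergence for the interchange in Eq.~(\ref{AsymptoticEnergy})), but the underlying argument is the same.
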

\begin{proof}
$\langle\dot{E}\rangle{|}_{f_{p}}/\langle\dot{N}\rangle{|}_{f_{s}} < \langle\dot{E}\rangle{|}_{f_{s}}/\langle\dot{N}\rangle{|}_{f_{s}}$
when $T_{p}\rightarrow0$ along the contour $I^{(0)}_{p}=0$ [cf.\ Eq.\ (\ref{ThirdLaw})
and Eq.\ (\ref{EnergyCurrent})]. The asymptotic limit of $\langle\dot{E}\rangle{|}_{f_{p}}/\langle\dot{N}\rangle{|}_{f_{s}}$
is $\omega_{c}$ [cf. Eq.\ (\ref{AsymptoticEnergy})]. $\langle\dot{E}\rangle{|}_{f_{p}}$ is 
continuous $\forall\ \ \mu_{p}\in(-\infty,\infty), T_{p}\in(0,\infty)$ and is monotonically increasing along $I^{(0)}_{p}=0$ (Theorem \ref{ThmUniqueness}).
We use the intermediate value theorem.
\end{proof}

\begin{figure*}
\captionsetup{justification=raggedright,
singlelinecheck=false
}
\includegraphics[width=3.56in]{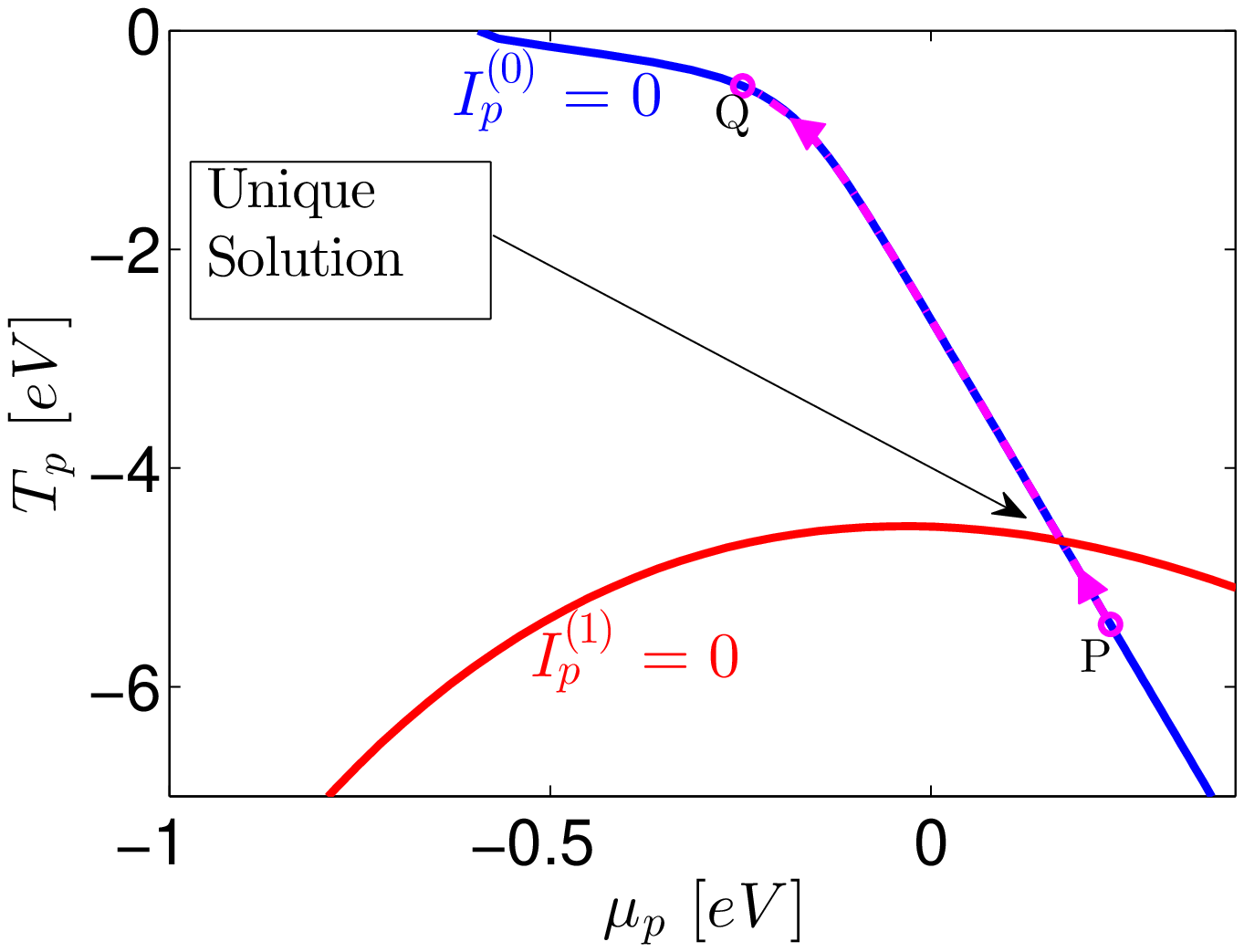}
\includegraphics[width=3.4in]{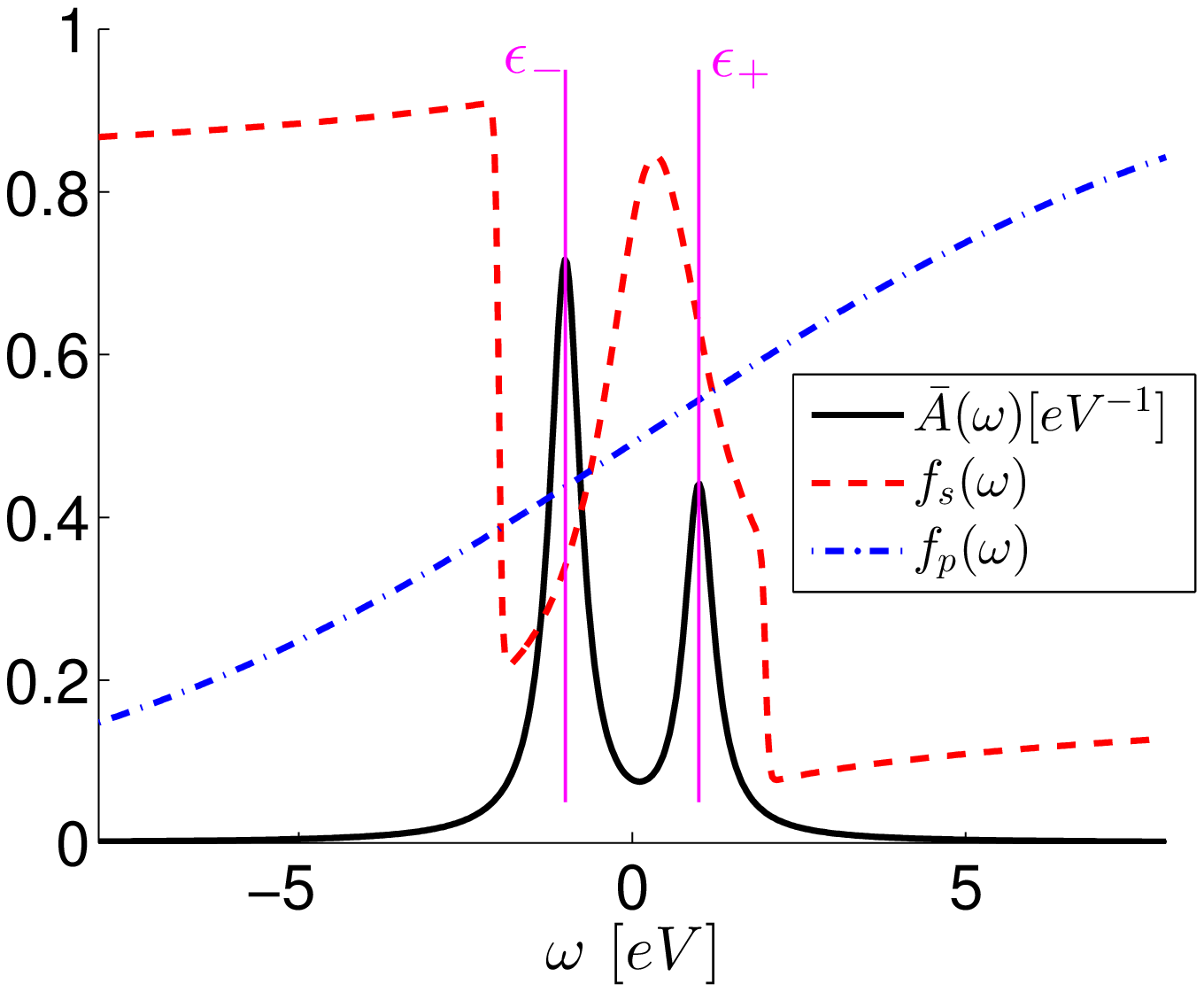}
\caption{\label{fig:Corollary3}Left panel: Illustration of Theorem \ref{ThmUniqueness} for negative temperatures. The contour $PQ$
along $I^{(0)}_{p}=0$ (shown in blue) cuts the contour $I^{(1)}_{p}=0$ (shown in red) exactly once.
Contour $PQ$ illustrates a certain statement of the second law of thermodynamics: The heat current is monotonically decreasing along $PQ$
(thus implying uniqueness).
Right panel: The local spectrum sampled by the probe $\bar{A}(\omega)$ (black, and nearly unchanged from Fig. \ref{fig:Theorem2}), 
the nonequilibrium distribution function $f_{s}(\omega)$ (red),
and the probe Fermi-Dirac distribution $f_{p}(\omega)$ (blue) which corresponds to the unique solution (shown in the left panel). 
The resonances in the spectrum $\bar{A}(\omega)$ correspond to the eigenstates
of the closed two-level Hamiltonian (see Sec. \ref{sec:TwoLevelSystem}) $\epsilon_{\pm}=\pm 1$ shown in magenta. 
The system has a net population inversion, satisfying the conditions of Corollary \ref{existence:Corollary}, and the probe
Fermi-Dirac distribution
is monotonically increasing with energy, corresponding to a negative temperature. % (net population inversion).
%The nonequilibrium distribution function $f_{s}(\omega)$ populates the sampled spectrum $\bar{A}(\omega)$ such that the
%condition of Corollary \ref{existence:Corollary} is satiesfied, 
%thus creating a population inversion. The complementary system whose nonequilibrium distribution
%function given by $f_{s}^{-}(\omega)\equiv1-f_{s}(\omega)$ will have the positive temperature solution ``$-T_{p}$".
}
\end{figure*}

\begin{CorollaryExistence}
\label{existence:Corollary}
There exists a negative temperature solution for a nonequilibrium system with net population inversion, i.e., when
\begin{equation}
\frac{\langle\dot{E}\rangle{|}_{f_{s}}}{\langle\dot{N}\rangle{|}_{f_{s}}}>\omega_{c}.
\end{equation}
\end{CorollaryExistence}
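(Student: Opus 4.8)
The plan is to mirror the proof of Theorem~\ref{exists} on the branch of negative probe temperatures $T_{p}\in(-\infty,0)$, where $f_{p}(\omega)=1/(1+e^{(\omega-\mu_{p})/T_{p}})$ is monotonically \emph{increasing} in $\omega$ and $-\partial f_{p}/\partial\omega<0$. The first task is to confirm that the apparatus of Sec.~\ref{sec:Uniqueness} carries over. Lemma~\ref{LemmaSecondLawNumberCurrent1} has a negative-temperature analogue: for fixed $T_{p}<0$ one has $\lim_{\mu_{p}\to-\infty}I^{(0)}_{p}<0$ and $\lim_{\mu_{p}\to+\infty}I^{(0)}_{p}>0$, and $I^{(0)}_{p}$ is now \emph{increasing} in $\mu_{p}$ (since $\Lfun{0}{ps}<0$), so $I^{(0)}_{p}=0$ still defines a unique continuous curve $\mu_{p}=M(T_{p})$ for all $T_{p}\in(-\infty,0)$. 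Theorem~\ref{ThmOnsager} survives intact: because $\myT_{ps}(\omega)\,\partial f_{p}/\partial\omega\geq0$ when $T_{p}<0$, its Cauchy--Schwarz proof goes through with $g=\sqrt{\myT_{ps}\,\partial f_{p}/\partial\omega}$ and $h=(\omega-\mu_{p})g$, again giving $\Lfun{0}{ps}\Lfun{2}{ps}-(\Lfun{1}{ps})^{2}>0$. And the monotonicity that powers Theorem~\ref{ThmUniqueness} is unchanged: along $I^{(0)}_{p}=0$ one has $dI^{(1)}_{p}/dT_{p}=(T_{p}\Lfun{0}{ps})^{-1}\big((\Lfun{1}{ps})^{2}-\Lfun{0}{ps}\Lfun{2}{ps}\big)$, and the two sign reversals (of $T_{p}$ and of $\Lfun{0}{ps}$) cancel, so $I^{(1)}_{p}$ is still strictly decreasing --- equivalently $\langle\dot{E}\rangle{|}_{f_{p}}$ strictly increasing --- in $T_{p}$ along the contour.

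Next I would compute the two endpoint values of $\langle\dot{E}\rangle{|}_{f_{p}}$ on this branch. As $T_{p}\to0^{-}$, $f_{p}(\omega)\to\Theta(\omega-M(0^{-}))$, a unit step \emph{up}, with $M(0^{-})$ fixed by $\tfrac{1}{h}\int_{M(0^{-})}^{\infty}\myT_{ps}=\langle\dot{N}\rangle{|}_{f_{s}}$; the computation of Eq.~(\ref{ThirdLaw}) with this reversed step yields $\lim_{T_{p}\to0^{-}}I^{(1)}_{p}<0$ --- strict unless $f_{s}$ itself coincides with that step on $\mathrm{supp}\,\myT_{ps}$, so that, as in the discussion after Eq.~(\ref{ThirdLaw}), $T_{p}=0$ is never attained --- i.e.\ $\lim_{T_{p}\to0^{-}}\langle\dot{E}\rangle{|}_{f_{p}}>\langle\dot{E}\rangle{|}_{f_{s}}$, the negative-temperature counterpart of the third law. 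As $T_{p}\to-\infty$, the scaling $\Lambda=\lim M(T_{p})/T_{p}$ of Eq.~(\ref{AsympoticValue}) is again governed by Eq.~(\ref{LambdaValue}) (the limit $f_{p}\to1/(1+e^{-\Lambda})$ and the contour constraint $\langle\dot{N}\rangle{|}_{f_{p}}=\langle\dot{N}\rangle{|}_{f_{s}}$ are unaffected by the sign of $T_{p}$), so, exactly as in Eq.~(\ref{AsymptoticEnergy}), $\lim_{T_{p}\to-\infty}\langle\dot{E}\rangle{|}_{f_{p}}=\omega_{c}\langle\dot{N}\rangle{|}_{f_{s}}$.

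To conclude, $\langle\dot{E}\rangle{|}_{f_{p}}$ is continuous and monotone in $T_{p}$ along $I^{(0)}_{p}=0$, sweeping from $\omega_{c}\langle\dot{N}\rangle{|}_{f_{s}}$ at $T_{p}\to-\infty$ up to a value strictly above $\langle\dot{E}\rangle{|}_{f_{s}}$ at $T_{p}\to0^{-}$. Net population inversion, $\omega_{c}\langle\dot{N}\rangle{|}_{f_{s}}<\langle\dot{E}\rangle{|}_{f_{s}}$, places $\langle\dot{E}\rangle{|}_{f_{s}}$ strictly inside this interval, and the intermediate value theorem then delivers a $T_{p}\in(-\infty,0)$ with $\langle\dot{E}\rangle{|}_{f_{p}}=\langle\dot{E}\rangle{|}_{f_{s}}$, i.e.\ $I^{E}_{p}=0$ and hence, by Eq.~(\ref{EnergyCurrent}), $I^{(1)}_{p}=0$ on the $I^{(0)}_{p}=0$ contour --- the desired negative-temperature solution; the strict monotonicity makes it unique. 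I expect the main obstacle to be the sign bookkeeping of the first paragraph: one has to verify that every sign flip forced by $T_{p}<0$ either cancels (as in the contour-monotonicity of $I^{(1)}_{p}$ and in Theorem~\ref{ThmOnsager}) or merely reverses an orientation (as in Lemma~\ref{LemmaSecondLawNumberCurrent1}), so that Theorems~\ref{ThmOnsager} and~\ref{ThmUniqueness} hold verbatim on the negative branch; once that is secured, the rest is a faithful mirror of the proof of Theorem~\ref{exists}.
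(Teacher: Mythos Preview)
Your proposal is correct but follows a genuinely different route from the paper. The paper's proof is a short duality argument: noting that $f_{p}(\mu_{p},-T_{p})=1-f_{p}(\mu_{p},T_{p})$, one checks from Eq.~(\ref{Rearraged}) that $I_{p}^{(\nu)}(\mu_{p},-T_{p})=-I_{p}^{(\nu)-}$, where $I_{p}^{(\nu)-}$ is the current for the \emph{complementary} system with nonequilibrium distribution $f_{s}^{-}\equiv 1-f_{s}$. A negative-temperature solution for $f_{s}$ is thus the same as a positive-temperature solution for $f_{s}^{-}$, so one simply applies Theorem~\ref{exists} to $f_{s}^{-}$; the existence condition $\langle\dot{E}\rangle|_{f_{s}^{-}}<\omega_{c}\langle\dot{N}\rangle|_{f_{s}^{-}}$ is then shown by three lines of algebra (using $\langle\dot{N}\rangle|_{f_{s}^{-}}=\gamma_{p}/\hbar-\langle\dot{N}\rangle|_{f_{s}}$ and $\langle\dot{E}\rangle|_{f_{s}^{-}}=\gamma_{p}\omega_{c}/\hbar-\langle\dot{E}\rangle|_{f_{s}}$) to be equivalent to the population-inversion hypothesis. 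You, by contrast, re-run the entire machinery of Sec.~\ref{sec:Uniqueness} and Sec.~\ref{sec:existence} on the branch $T_{p}<0$: you re-establish Lemma~\ref{LemmaSecondLawNumberCurrent1}, Theorem~\ref{ThmOnsager}, and the contour monotonicity of Theorem~\ref{ThmUniqueness} with all signs tracked, then recompute both endpoint limits and invoke the intermediate value theorem directly. Your approach is self-contained and makes the negative-temperature structure explicit, at the cost of the sign bookkeeping you flag; the paper's duality trick is shorter and avoids that bookkeeping entirely by reducing the negative-temperature problem to an instance of the positive-temperature Theorem~\ref{exists} already proved.
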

\begin{proof}
Let $f_{p}(\mu_{p},T_{p})$ be the Fermi-Dirac distribution with $T_{p}>0$; we
define the Fermi-Dirac distribution $f_{p}^{-}\equiv f_{p}(\mu_{p},-T_{p})=1-f_{p}$.

\begin{equation}
\begin{aligned}
I_{p}^{(\nu)}(\mu_{p},-T_{p})&=\frac{1}{h}\int_{-\infty}^{\infty}d\omega(\omega-\mu_{p})^{\nu}\myT_{ps}(\omega)[f_{s}(\omega)\\
&\ \ \ \ \ \ \ \ \ \ \ \ \ \ \ \ \ \ \ \ \ \ \ \ \ \ \ \ \ \ \ \ \  -\big(1-f_{p}(\omega)\big)]\\
&=\frac{1}{h}\int_{-\infty}^{\infty}d\omega(\omega-\mu_{p})^{\nu}\myT_{ps}(\omega)[f_{p}(\omega)\\
&\ \ \ \ \ \ \ \ \ \ \ \ \ \ \ \ \ \ \ \ \ \ \ \ \ \ \ \ \ \ \ \ \  -\big(1-f_{s}(\omega)\big)]\\
&=\frac{1}{h}\int_{-\infty}^{\infty}d\omega(\omega-\mu_{p})^{\nu}\myT_{ps}(\omega)[f_{p}(\omega)-f_{s}^{-}(\omega)]\\
&\equiv -I_{p}^{(\nu)-}
\label{NegativeCurrents}
\end{aligned}
\end{equation}
$I_{p}^{(\nu)-}=0$ with $\nu=\{0,1\}$ is now understood to solve the complementary nonequilibrium system with $f_{s}^{-}(\omega)\equiv1-f_{s}(\omega)$.

$f_{s}^{-}(\omega)$ is of course a completely valid nonequilibrium distribution function and satisfies Eq.\ (\ref{fsBounded}).
We apply Theorem \ref{exists} and find that
\begin{equation}
\begin{aligned}
\langle\dot{E}\rangle{|}_{f_{s}^{-}}&<\ \omega_{c}\langle\dot{N}\rangle{|}_{f_{s}^{-}}\\
\frac{\gamma_{p}}{\hbar}\omega_{c}-\langle\dot{E}\rangle{|}_{f_{s}}&<\ \omega_{c}\big(\frac{\gamma_{p}}{\hbar}-\langle\dot{N}\rangle{|}_{f_{s}}\big)\\
-\langle\dot{E}\rangle{|}_{f_{s}}&<\ -\omega_{c}\langle\dot{N}\rangle{|}_{f_{s}}\\
\langle\dot{E}\rangle{|}_{f_{s}}&>\ \omega_{c}\langle\dot{N}\rangle{|}_{f_{s}}.
\end{aligned}
\end{equation}
For the case that $\langle\dot{E}\rangle{|}_{f_{s}}=\ \omega_{c}\langle\dot{N}\rangle{|}_{f_{s}}$, $T_p=\pm \infty$, corresponding to $f_p =1/2$,
independent of energy.
\end{proof}

\subsection{Ideal Probes: The Broadband Limit}
\label{subsec:Broadband}
In the broadband limit, the probe-system coupling becomes energy independent, and we may write $\Gamma^{p}(\omega)=\Gamma^{p}(\mu_{0})$. 
The spectrum of the system,
sampled locally by the probe, is given by
\begin{equation}
\label{LocalSpectrum}
\begin{aligned}
\bar{A}(\omega)\equiv& \frac{\Tr{\Gamma^{p}(\omega)A(\omega)}}{\Tr{\Gamma^{p}(\omega)}}\\
=&\frac{\Tr{\Gamma^{p}(\mu_{0})A(\omega)}}{\Tr{\Gamma^{p}(\mu_{0})}}.
\end{aligned}
\end{equation}

The occupancy and energy of the system, respectively, are given by
\begin{equation}
\begin{aligned}
\langle{N}\rangle{|}_{f_{s}}=& \int_{-\infty}^{\infty}d\omega\bar{A}(\omega)f_{s}(\omega)\\
\langle{E}\rangle{|}_{f_{s}}=& \int_{-\infty}^{\infty}d\omega\ \omega\bar{A}(\omega)f_{s}(\omega).
\end{aligned}
\label{eq:moments}
\end{equation}
The measurement conditions in Eq.\ (\ref{equilibrium}) become simply \cite{Stafford2014}
\begin{equation}
\label{BBMeasurement}
\begin{aligned}
\langle{N}\rangle{|}_{f_{p}}&=\langle{N}\rangle{|}_{f_{s}}\\
\langle{E}\rangle{|}_{f_{p}}&=\langle{E}\rangle{|}_{f_{s}}.
\end{aligned}
\end{equation}
The above equations imply that an {\em ideal measurement} of voltage and temperature 
constitutes a measurement of the zeroth and first moments of the local energy distribution of the system. That is to say, 
when the probe is in local equilibrium with the nonequilibrium system,
the local occupancy and energy of the system are the same as they would be if the system's local spectrum were populated by
the equilibrium Fermi-Dirac distribution $f_{p}\equiv f_{p}(\mu_{p},T_{p})$ of the probe. 

%such that 
%which results in the exact same expectation values for (a) the local occupancy and (b) the local energy of the nonequilibrium system, as that obtained
%when the system is populated using its own local nonequilibrium distribution $f_{s}$.

We may now write the condition for the existence of a positive temperature solution (Theorem \ref{exists}) simply as
\begin{equation}
\frac{\langle{E}\rangle{|}_{f_{s}}}{\langle{N}\rangle{|}_{f_{s}}}<\omega_{c},
\label{NoNetPopulationInversion}
\end{equation}
where $\omega_{c}$ is the centroid of the spectrum given by
\begin{equation}
\label{centroid}
\omega_{c}=\int_{-\infty}^{\infty} d\omega\ \omega\bar{A}(\omega).
\end{equation}
The condition in Eq.\ (\ref{NoNetPopulationInversion}) implies the following: Given some nonequilibrium distribution function $f_{s}$, one
can have a positive temperature solution if and only if the average energy per particle is smaller than the centroid of the spectrum. In other words,
a positive temperature solution exists if and only if there is no net population inversion. 
Similarly, the corollary \ref{existence:Corollary} states that there exists a negative temperature solution for a system exhibiting population inversion:
\begin{equation}
\frac{\langle{E}\rangle{|}_{f_{s}}}{\langle{N}\rangle{|}_{f_{s}}}>\omega_{c}.
\end{equation}

The %interpretive 
advantage of the broadband limit is that one may write the measurement conditions, as well as the condition for the existence of a
solution, in terms of the local expectation values of the energy and occupancy directly, instead of using the rate of particle and energy flow
into the probe. We also do not need to introduce a ``characteristic tunneling rate.''
We note that $\omega_{c}$ in Eq.\ (\ref{centroid}) is the centroid since the local
spectrum $\bar{A}$ normalizes to unity within the broadband limit (see Appendix \ref{sec:AppC}).

A local measurement by a weakly-coupled 
broadband thermoelectric probe is {\em ideal} in the sense 
that the result is independent of the properties of the probe, and depends only on the nonequilibrium
state of the system and the subsystem thereof sampled by the probe.
Such a measurement provides more than just an operational definition of the local temperature and voltage
of a nonequilibrium quantum system, 
since the thermodynamic variables are determined directly by the moments (\ref{eq:moments}) of the local (nonequilibrium) energy distribution.

% (this is perhaps, at least partly, because we haven't considered the many-body correlation effects arising from the
% introduction of these reservoirs).

\subsection{Nonunique Measurements: The Narrowband Limit}
\label{sec:Narrowband}
A narrowband probe is one that samples the system only within a very narrow window of energy. The extreme case of such a probe-system coupling
would be a Dirac-delta function:
\begin{equation}
\Gamma_{p}(\omega)=2\pi V_{p}^{\dagger}V_{p}\delta(\omega-\omega_{0}),
\end{equation}
which gives $\myT_{ps}(\omega)=2\pi\Tr{ V_{p}A(\omega)V_{p}^{\dagger}}\delta(\omega-\omega_{0})$ which we write simply as
\begin{equation}
\label{NBTransmission}
\myT_{ps}(\omega)=\gamma(\omega)\ \delta{(\omega-\omega_{0})},
\end{equation}
where $\gamma(\omega)=2\pi\Tr{ V_{p}A(\omega)V_{p}^{\dagger}}$ has dimensions of energy.

We previously noted that Theorem \ref{ThmOnsager} does not hold for $\myT_{ps}$ given by Eq.\ (\ref{NBTransmission}).
One can verify straightforwardly that, for a probe-sample transmission that is extremely narrow, we will have
\begin{equation}
\Lfun{0}{ps}\Lfun{2}{ps}-\big(\Lfun{1}{ps}\big)^{2}=0.
\end{equation}
This results in a nonunique solution since following the proof of theorem \ref{ThmUniqueness} would
give us [cf.\ Eq.\ (\ref{eqn:Onsager})] $\Delta{I^{(1)}_{p}}=0$.
In fact, it would lead to a family of solutions. 

We may solve for the solution explicitly. The number current reduces to
\begin{equation}
I^{(0)}_{p}=\frac{\gamma(\omega_{0})}{h}\big(f_{p}(\omega_{0})-f_{s}(\omega_{0})\big),
\end{equation}
while the heat current is given by
\begin{equation}
I^{(1)}_{p}=(\omega_{0}-\mu_{p})\frac{\gamma(\omega_{0})}{h}\big(f_{p}(\omega_{0})-f_{s}(\omega_{0})\big),
\end{equation}
which trivially vanishes for vanishing number current. Therefore, the
family of solutions to the measurement is simply given by
\begin{equation}
f_{p}(\omega_{0};\mu_{p},T_{p})=f_{s}(\omega_{0}),
\end{equation}
which is linear in the $\mu_{p}-T_{p}$ plane and is given by
\begin{equation}
\mu_{p}= \omega_{0} - T_{p}\log\bigg(\frac{1-f_{s}(\omega_{0})}{f_{s}(\omega_{0})}\bigg).
\end{equation}
$f_{s}(\omega)$ has the following explicit form:
\begin{equation}
%f_{s}(\omega)= \frac{\Tr{\bar{\Gamma}^{p}G^{<}(\omega)}}{2{\pi}i\Tr{\bar{\Gamma}^{p}A(\omega)}}.
f_{s}(\omega)= \frac{\Tr{V_p G^{<}(\omega)V_p^\dagger}}{2{\pi}i\Tr{V_p A(\omega)V_p^\dagger}}.
\end{equation}

A {\em narrowband probe} is therefore unsuitable for thermoelectric measurements. Even if a probe were to sample %(Dirac deltas)
the system at just two distinct energies $\omega_{1}$ and $\omega_{2}$, theorem \ref{ThmOnsager} would hold and the thermoelectric 
measurement would be unique.
Indeed, the narrowband probe is a pathological case whose only function is to highlight a certain theoretical limitation for the
measurement of the temperature and voltage. %It is not a cause of
%experimental concern. We may therefore consider such a probe to be a {\em pathologically nonideal probe.}

\subsection{Example: Two-level system}
\label{sec:TwoLevelSystem}
Net population inversion is essentially a quantum phenomenon, since classical Hamiltonians are generally unbounded above
due to the kinetic energy term, i.e., there does
not exist a finite $c\in\mathbb{R}$ that satisfies $\langle H\rangle<c$. In other words, $\omega_{c}\to\infty$ generally holds for classical systems
and negative temperatures are not possible. The simplest quantum system where a net population inversion can be achieved is a two-level system.
We therefore illustrated our results for a two-level system in \cref{fig:Lemma1,fig:Lemma2,fig:Theorem2,fig:Corollary3}.

\begin{figure}
\captionsetup{justification=raggedright,
singlelinecheck=false}
  \centering
  \begin{tabular}{@{}c@{}}
    \includegraphics[width=2.5in]{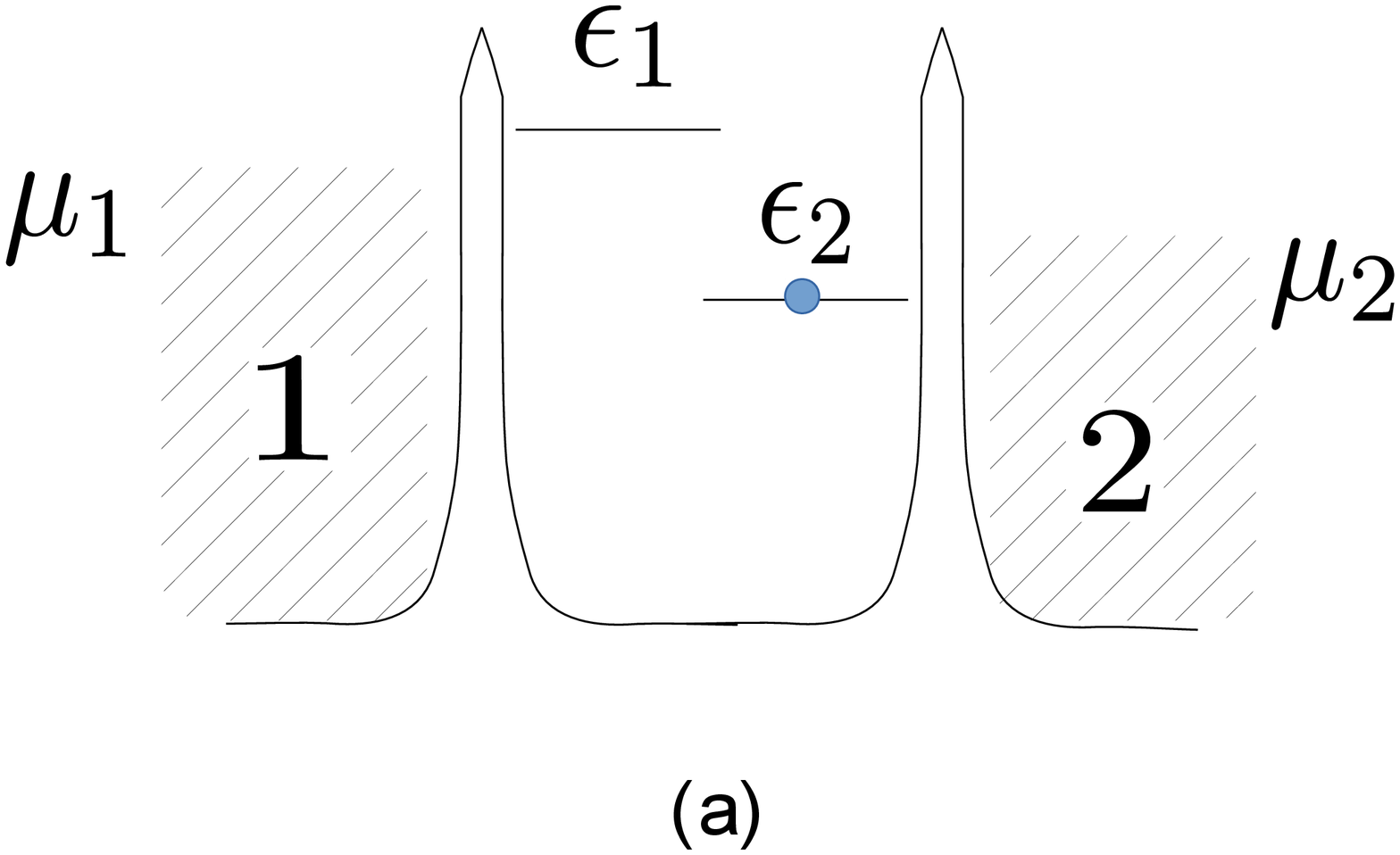} 
  \end{tabular} 
  \begin{tabular}{@{}c@{}}
    \includegraphics[width=2.5in]{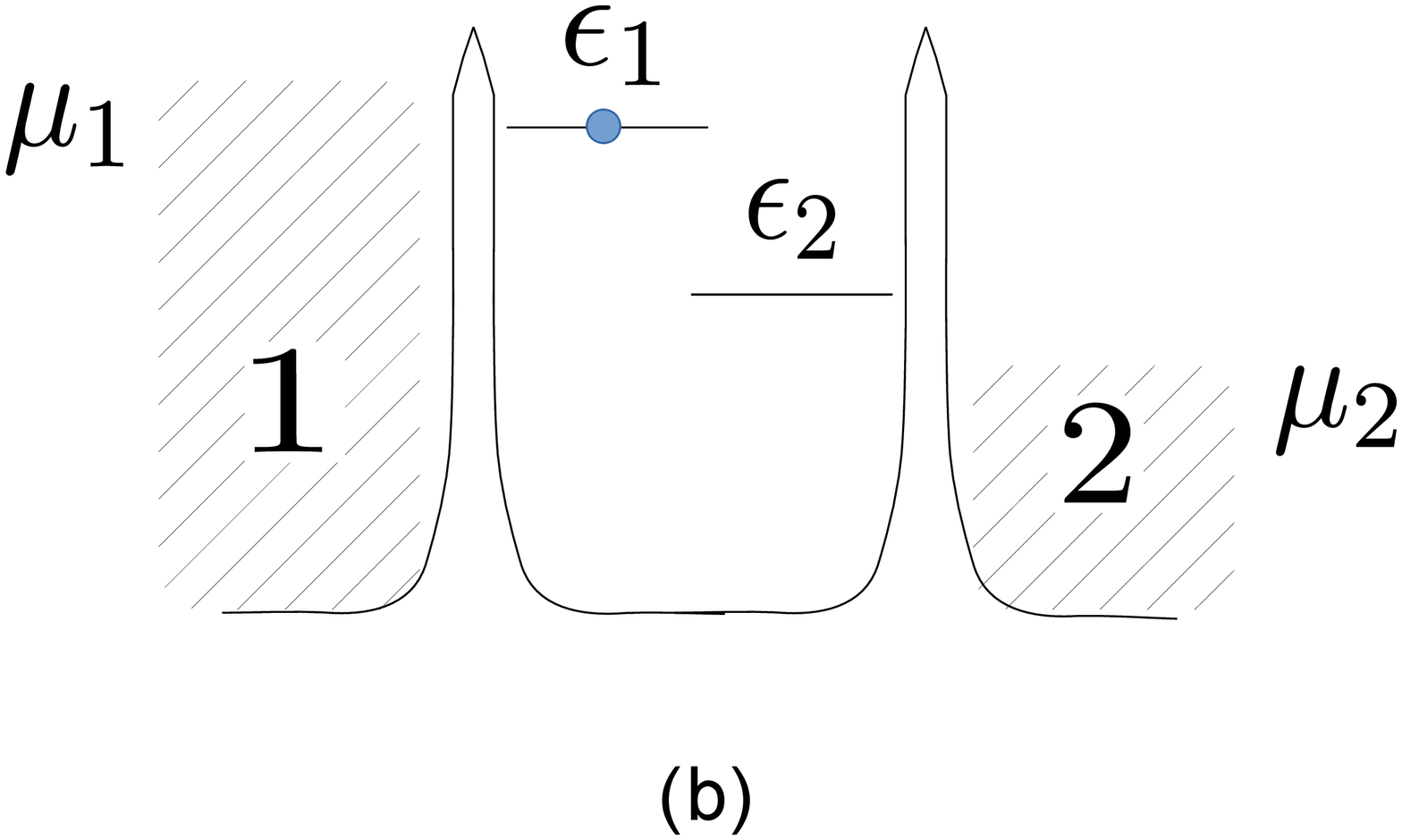} 
  \end{tabular} 
  \caption{Schematic diagram of a two-level system coupled to two electron reservoirs under bias.
   (a) Bias condition not leading to population inversion.
   (b) Bias condition leading to population inversion due to direct injection into excited state.
          }
\label{fig:myfig}
\end{figure}

The system Hamiltonian here was taken to be
\begin{equation}
% H=\frac{1}{3}     \begin{bmatrix}
%     1 & -2+2i  \\
%     -2-2i & -1
    H=     \begin{bmatrix}
    \epsilon_{1} & V  \\
    V^{*} & \epsilon_{2}
  \end{bmatrix}, 
\end{equation}
whose values were set as $V=\frac{2(1-i)}{3}$, $\epsilon_{1}=1/3$ and $\epsilon_{2}=-1/3$, such that the
eigenvalues are $\epsilon_{\pm}=\pm 1$ and units are taken as $eV$. % for all the \cref{fig:Lemma1,fig:Lemma2,fig:Theorem2,fig:Corollary3}.
We introduce two reserviors that
are strongly coupled locally to each site with $\Gamma_{1}=\text{diag}(0.5,0)$ and $\Gamma_{2}=\text{diag}(0,0.5)$, while the probe coupling is taken as 
$\Gamma_{p}=\text{diag}(0.01,0.1)$, which is about five times weaker than the coupling to the reservoirs that bias the system. 

We used two different bias conditions: (a) To illustrate the case without a net population inversion in \cref{fig:Lemma1,fig:Lemma2,fig:Theorem2},
the reservoirs had a symmetric ($\mu_{1}+\mu_{2}=0$) voltage bias $\mu_{1}-\mu_{2}=1eV$;
(b) to illustrate the case with a net population inversion in \cref{fig:Corollary3}, the reservoirs had a symmetric voltage bias of 
$\mu_{1}-\mu_{2}=4eV$. The two reservoirs are held at $T=300K$ for both cases.

It has been previously noted that
the probe-system coupling strength does not strongly affect 
the measured temperature and voltage even when varied over several orders of magnitude \cite{Meair14}, but
we remind the reader that our theoretical results depend upon the assumption of a weakly-coupled probe (noninvasive measurements).
How {\em weak} is {\em weak enough} is a different, and perhaps more subtle,
theoretical question. Numerically, however, we do find that the probe measurements are not much altered even when the probe coupling
strength is comparable
to that of the strongly-coupled reservoirs.

\section{Conclusions}
\label{sec:conclusions}

The local temperature and voltage of a nonequilibrium quantum system are defined in terms of the equilibration %(\ref{equilibrium})
of a noninvasive thermoelectric probe,
locally coupled to the system.  The simultaneous temperature and voltage measurement is shown to be unique for any system of fermions in steady state,
arbitrarily far from equilibrium, with arbitrary interactions within the system, and the conditions for the existence of
a solution are derived.  In particular, it is shown that a positive temperature solution exists provided the system does not have a net local population
inversion; in the case of population inversion, a unique negative temperature solution is shown to exist.  These results provide a firm
mathematical foundation for temperature and voltage measurements in quantum systems far from equilibrium.

Our analysis reveals that a simultaneous temperature and voltage measurement is uniquely determined by the local spectrum %$\bar{A}(\omega)$ 
and nonequilibrium distribution %$f_s(\omega)$ 
of the system [cf.\ Eq.\ (\ref{NewEquilibrium})], and is independent of the properties of the probe for broadband coupling (ideal probe).
Such a measurement therefore provides a {\em fundamental definition} of local temperature and voltage, which is not merely operational.

In contrast, prior theoretical work relied almost exclusively on operational definitions
\cite{Engquist81,Dubi2009b,Jacquet2009,Dubi2009,Caso2011,Jacquet2012,Sanchez2011,Caso10,Bergfield2013demon,Bergfield2015,Ye2015}, 
leading to a competing panoply of often contradictory predictions for the measurement 
of such basic observables as temperature and voltage.  Measurements of temperature or voltage, taken separately 
(see, e.g., Refs.\ \onlinecite{Engquist81,Jacquet2012}), 
%without the simultaneous determination of the other variable, 
are shown to be ill-posed:  a thermometer out of electrical equilibrium with a system produces an error due to the Peltier effect across the
probe-sample junction, while a potentiometer out of thermal equilibrium with a system produces an error due to the Seebeck effect.

Our results put the local thermodynamic variables temperature and voltage
on a mathematically rigorous footing for fermion systems under very general nonequilibrium steady-state conditions,
a necessary first step toward the construction of
nonequilibrium thermodynamics \cite{Ruelle2000,Casas-Vazques2003,Lebon2008,Cugliandolo2011,Jacquet2012,Stafford2014,Esposito2015,Shastry2015}.
Our analysis includes the effect of interactions with bosonic degrees of freedom (e.g., photons, phonons, etc.) on the fermions.  However, the 
temperatures of the bosons themselves \cite{Ming10,Galperin2007} were not addressed in the present analysis. %and it is an open question whether 
Moreover, we did not explicitly consider magnetic systems, which require separate consideration of the spin degree of freedom, and its polarization.
Future investigation of probes that exchange bosonic or spin excitations may enable similarly rigorous analysis of local thermodynamic variables in
bosonic and magnetic systems, respectively.

\section{Acknowledgements}

A.S would like to thank Janek Wehr for interesting and useful lectures in mathematical physics. This work
was supported by the U.S. Department of Energy, Office of
Science under Award No. DE-SC0006699. 

\appendix

\section{The nonequilibrium steady state}
\label{sec:AppA}

The nonequilibrium steady state is described by a density matrix $\hat{\rho}$ that is time-independent.
The expectation values of observables are given by their usual prescription in statistical physics, e.g.,
\begin{equation}
\langle\hat{Q}\rangle=\Tr{\hat{\rho}\hat{Q}}= \sum_{\mu,\nu}{\rho_{\mu\nu}\langle\nu|\hat{Q}|\mu\rangle}.
\end{equation}

The ``lesser" and ``greater" Green's functions \cite{Stefanucci2013} used in the paper are defined as follows
\begin{equation}
G^{<}_{\alpha\beta}(t)\equiv i\langle {d^{\dagger}_{\beta}(0)}d_{\alpha}(t)\rangle,
\end{equation}
while its Hermitian conjugate is
\begin{equation}
G^{>}_{\alpha\beta}(t)\equiv -i\langle d_{\alpha}(t){d^{\dagger}_{\beta}(0)}\rangle,
\end{equation}
where
\begin{equation}
d_{\alpha}(t)= e^{i\frac{\hat{H}}{\hbar}t}d_{\alpha}(0)e^{-i\frac{\hat{H}}{\hbar}t}
\end{equation}
evolves according to the Heisenberg equation of motion for a system with Hamiltonian $\hat{H}$.
Here, $\alpha$, $\beta$ denote basis states in the 1-body Hilbert space of the system.

The spectral representation uses the eigenbasis of the Hamiltonian $\hat{H}|\nu\rangle=E_{\nu}|\nu\rangle$, where $\nu$ denotes a many-body energy
eigenstate.
One may write the ``lesser" Green's function as
\begin{equation}
\begin{aligned}
G^{<}_{\alpha\beta}(\omega)=2\pi i\sum_{\mu,\mu{'},\nu}\rho_{\mu\nu}&\langle\nu|d^{\dagger}_{\beta}|\mu{'} \rangle\langle\mu{'} |{d_{\alpha}}|\mu\rangle\\
&\times\delta\bigg(\omega-\frac{E_{\mu}-E_{\mu{'}}}{\hbar}\bigg),
\end{aligned}
\end{equation}
while the ``greater" Green's function becomes
\begin{equation}
\begin{aligned}
G^{>}_{\alpha\beta}(\omega)=-2\pi i\sum_{\mu,\mu{'},\nu}\rho_{\mu\nu}&\langle\nu|{d_{\alpha}}|\mu{'} \rangle\langle\mu{'} |d^{\dagger}_{\beta}|\mu\rangle\\
&\times\delta\bigg(\omega-\frac{E_{\mu{'}}-E_{\nu}}{\hbar}\bigg).
\end{aligned}
\end{equation}
The spectral function $A(\omega)$ is given by
\begin{equation}
A(\omega)\equiv \frac{1}{2\pi i}\bigg(G^{<}(\omega)-G^{>}(\omega)\bigg),
\end{equation}
and can be expressed in the spectral representation as
\begin{equation}
\label{SpectralFunction}
\begin{aligned}
A_{\alpha\beta}(\omega)=\sum_{\mu,\mu{'},\nu}\bigg[\rho_{\mu\nu}&\langle\nu|d^{\dagger}_{\beta}|\mu{'} \rangle\langle\mu{'} |{d_{\alpha}}|\mu\rangle\\
 +&  \rho_{\nu\mu{'}}\langle\mu{'}|{d_{\alpha}}|\mu \rangle\langle\mu |d^{\dagger}_{\beta}|\nu\rangle\bigg]\\
 & \times\delta\bigg(\omega-\frac{E_{\mu}-E_{\mu{'}}}{\hbar}\bigg).
\end{aligned}
\end{equation}
% We show here that $0\leq f_{s}(\omega)\leq1$ $\forall \omega\in(-\infty,\infty)$.

\subsection{Sum rule for the spectral function}
\label{sec:AppC} 
Eq.\ (\ref{SpectralFunction}) leads to the following sum rule for the spectral function:
\begin{equation}
\label{SumRule}
\begin{aligned}
\int_{-\infty}^{\infty}d\omega A_{\alpha\beta}(\omega)=&\sum_{\mu,\nu}\rho_{\mu\nu}\langle\nu|d^{\dagger}_{\beta}d_{\alpha}|\mu\rangle\\
&\ \ \ \ \ \ \ \ \ \ \ \ \ \ +\sum_{\mu{'},\nu}\rho_{\nu\mu{'}}\langle\mu{'}|d_{\alpha}d^{\dagger}_{\beta}|\nu\rangle\\
=&\sum_{\mu,\nu}\rho_{\mu\nu}\langle\nu|d^{\dagger}_{\beta}d_{\alpha}+d_{\alpha}d^{\dagger}_{\beta}|\mu\rangle\\
=&\sum_{\mu,\nu}\rho_{\mu\nu}\langle\nu|\delta_{\alpha\beta}|\mu\rangle\\
=&\sum_{\mu,\nu}\rho_{\mu\nu}\delta_{\mu\nu}\delta_{\alpha\beta}\\
=&\delta_{\alpha\beta}\Tr{\hat{\rho}}\\
=&\delta_{\alpha\beta}.
\end{aligned}
\end{equation}

In our theory of local thermodynamic measurements, the quantity of interest is the local spectrum of the system sampled by the probe $\bar{A}(\omega)$,
defined in Eq.\ (\ref{LocalSpectrum}). This obeys a further sum rule in the
broadband limit ({\em ideal probe}), discussed below.

\subsubsection{Local spectrum in the broadband limit}

The probe-system coupling is energy independent in the broadband limit, $\Gamma^{p}(\omega)=\text{const}$, and we write $\Tr{\Gamma^{p}}=\bar{\Gamma}^{p}$
for its trace.
The local spectrum sampled by the probe $\bar{A}(\omega)$ defined in Eq.\ (\ref{LocalSpectrum})
can be written in the broadband limit as
\begin{equation}
\bar{A}(\omega)=\frac{1}{\bar{\Gamma}^{p}}\sum_{\alpha,\beta}\langle\beta|\Gamma^{p}|\alpha\rangle A_{\alpha\beta}(\omega).
\end{equation}
In this limit, it obeys a further sum rule:
\begin{equation}
\begin{aligned}
\int_{-\infty}^{\infty}d\omega \bar{A}(\omega)&= \frac{1}{\bar{\Gamma}^{p}}\sum_{\alpha,\beta}\langle\beta|\Gamma^{p}|\alpha\rangle \int_{-\infty}^{\infty}d\omega A_{\alpha\beta}(\omega)\\
&=\frac{1}{\bar{\Gamma}^{p}}\sum_{\alpha,\beta}\langle\beta|\Gamma^{p}|\alpha\rangle\delta_{\alpha\beta}\\
&=1.
\end{aligned}
\label{eq:sumrulelocal}
\end{equation}
The broadband limit is special in that the measurement is determined by the local properties of the system itself, and is not influenced by the spectrum of
the probe.  In this limit, the local spectrum $\bar{A}(\omega)$ obeys the sum rule (\ref{eq:sumrulelocal}) since the probe samples the same
subsystem at all energies.  One should not expect such a local sum rule to hold outside the broadband limit, since the probe samples different subsystems
at different energies.

\subsection{Diagonality of $\hat{\rho}$}

We have, for any observable $\hat{Q}$,
\begin{equation}
\begin{aligned}
\langle\hat{Q}(t)\rangle&= \sum_{\mu,\nu} \rho_{\mu\nu}\langle\nu|\hat{Q}(t)|\mu\rangle\\
&=\sum_{\mu,\nu} \rho_{\mu\nu}\langle\nu|e^{i\frac{\hat{H}}{\hbar}t}\hat{Q}e^{-i\frac{\hat{H}}{\hbar}t}|\mu\rangle\\
&=\sum_{\mu,\nu} \rho_{\mu\nu}e^{-i\frac{E_{\mu}-E_{\nu}}{\hbar}t}\langle\nu|\hat{Q}|\mu\rangle.
\end{aligned}
\end{equation}

The system observables must be independent of time in steady state. Therefore $\hat{\rho}$ must be diagonal in the energy basis,
as seen from the above equation. The nondiagonal parts
of $\hat{\rho}$ in the energy basis, when they exist, must be in a degenerate subspace so that $E_{\mu}=E_{\nu}$ in the above equation.

For states degenerate in energy, the boundary conditions determining the nonequilibrium steady state will determine the basis in which $\hat{\rho}$
is diagonal. Henceforth, we work in that basis.

\subsection{Positivity of $-iG^{<}(\omega)$ and $iG^{>}(\omega)$}

Working in the energy eigenbasis in which $\hat{\rho}$ is diagonal, %we have
\begin{eqnarray}
\lefteqn{
-i\langle\alpha|G^{<}(\omega)|\alpha\rangle
 \equiv 
-iG^{<}_{\alpha\alpha}(\omega)  
=}
%\hspace*{-10mm}
\nonumber \\
 & & \!\!\!\!\!\!\! 2\pi\sum_{\mu,\mu{'}}\rho_{\mu\mu}\abs{\langle\mu|d^{\dagger}_{\alpha}|\mu{'}\rangle}^{2}
\delta\bigg(\omega-\frac{E_{\mu}-E_{\mu{'}}}{\hbar}\bigg) 
\geq 0.
\end{eqnarray}
Similarly,
\begin{eqnarray}
\lefteqn{i\langle\alpha|G^{>}(\omega)|\alpha\rangle \equiv
iG^{>}_{\alpha\alpha}(\omega) =} 
\nonumber \\
& & \!\!\!\!\!\!\! 2\pi\sum_{\mu,\mu{'}}\rho_{\mu\mu}\abs{\langle\mu|d^{\dagger}_{\alpha}|\mu{'}\rangle}^{2}
\delta\bigg(\omega-\frac{E_{\mu{'}}-E_{\mu}}{\hbar}\bigg)
\geq 0.
\end{eqnarray}
It follows that
\begin{equation}
\langle\alpha|A(\omega)|\alpha\rangle=
\frac{1}{2\pi}\langle\alpha|-\!iG^{<}(\omega)+iG^{>}(\omega)|\alpha\rangle
\geq 0.
\end{equation}
Therefore, all three operators $-iG^{<}(\omega)$, $iG^{>}(\omega)$, and $A(\omega)$ are positive-semidefinite.

\subsection{$0\leq f_{s}(\omega)\leq1$}%Boundedness of the nonequilibrium distribution function}
\label{Property:nonequilibriumdistribution}

The nonequilibrium distribution function $f_{s}(\omega)$ was defined in Eq.\ (\ref{nonequilibriumdistribution}) as
\begin{equation}
f_{s}(\omega)\equiv\frac{\Tr{\Gamma^{p}(\omega)G^{<}(\omega)}}{2{\pi}i\Tr{\Gamma^{p}(\omega)A(\omega)}}.
\label{eq:fs}
\end{equation}
We have $\Gamma^{p}(\omega)> 0$ by causality \cite{Stefanucci2013}:
\begin{equation}
\text{Im}\ \Sigma^{r}_{p}(\omega)=-\frac{1}{2}\Gamma^{p}(\omega)<0.
\end{equation}
Let $\Gamma^{p}|\gamma_{p}\rangle=\gamma_{p}|\gamma_{p}\rangle$, where $\gamma_{p}\geq0$ and some $\gamma_{p}$ satisfy $\gamma_{p}>0$.
The energy dependence is taken to be implicit.
The traces in Eq.\ (\ref{eq:fs}) may be evaluated in the eigenbasis of $\Gamma^p$, yielding:
\begin{equation}
\begin{aligned}
f_{s}(\omega)&= \frac{\sum_{\gamma_{p}}\gamma_{p}\langle\gamma_{p}|G^{<}(\omega)|\gamma_{p}\rangle}{2\pi i\sum_{\gamma_{p}}\gamma_{p}\langle\gamma_{p}|A(\omega)|\gamma_{p}\rangle}\\
&= \frac{\sum_{\gamma_{p}}\gamma_{p}\langle\gamma_{p}|-i G^{<}(\omega)|\gamma_{p}\rangle}{\sum_{\gamma_{p}}\gamma_{p}\langle\gamma_{p}|-iG^{<}(\omega)+iG^{>}(\omega)|\gamma_{p}\rangle}.
\end{aligned}
\end{equation}
Therefore
\begin{equation}
0\leq f_{s}(\omega)\leq1.
\end{equation}

\bibliography{./refs}
\end{document}